\documentclass[11pt]{article}

\usepackage[margin=1in]{geometry}
\usepackage{amsmath, amsthm, amssymb, amsfonts}
\usepackage{hyperref}
\usepackage{enumerate}
\usepackage{hyperref}

\newtheorem{theorem}{Theorem}
\newtheorem{lemma}[theorem]{Lemma}
\newtheorem{definition}{Definition}

\newtheorem{remark}[theorem]{Remark}
\newcommand{\eps}{\ensuremath{\varepsilon}}

\usepackage{graphicx} 
\usepackage{algpseudocode}
\usepackage{algorithm}

\usepackage[inline]{enumitem}

\usepackage{subcaption}

\usepackage{tikz}
\usetikzlibrary {arrows.meta}
\usetikzlibrary{intersections}
\usepackage{pgfplots}
\pgfplotsset{compat=1.18}
\usepgfplotslibrary{fillbetween}
 
\tikzset{patient/.style={circle,minimum size=7mm,inner sep=0pt,draw,semithick}}
\tikzset{donor/.style={minimum size=7mm,inner sep=0pt,draw,semithick}}
\tikzset{outdonor/.style={minimum size=7mm, double, inner sep=0pt,draw, semithick}}
\tikzset{indonor/.style={minimum size=7mm, inner sep=0pt,draw,ultra thick}}
\tikzset{kidneyarcs/.style={semithick, arrows = {-Stealth}}}
\tikzset{boldkidneyarcs/.style={ultra thick, blue, arrows = {-Stealth}}}

\usepackage{todonotes}

\newcommand{\NN}{\ensuremath{\mathbb{N}}}

\newcommand{\DDKE}[2]{\ensuremath{(#1,#2)\mathrm{-DDKE}}}
\newcommand{\OPT}{\ensuremath{\mathrm{OPT}}}

\title{\bf Kidney Exchange with Dual Donors}

\author{Ioannis Caragiannis\\Aarhus University\\\url{iannis@cs.au.dk} \and Tomasz Filip Gawron\\Aarhus University\\\url{tomasz.gawron.901@gmail.com} \and Kristoffer Arnsfelt Hansen\\Aarhus University\\\url{arnsfelt@cs.au.dk} \and Yuchen Zhang\\Aarhus University\\\url{deemozyc@gmail.com} }

\date{}

\allowdisplaybreaks

\begin{document}

\maketitle

\begin{abstract}
\noindent Kidney exchange programs enable patients with incompatible living donors to receive transplants by exchanging donors with other patient-donor pairs. Existing models typically assume either that each patient enrolls with a single donor, or that at most one of a patient's multiple alternative donors is selected to donate. In practice, however, a patient may have access to multiple willing but incompatible donors who are all prepared to donate simultaneously if doing so enables the patient to receive a kidney.

We study a dual-donor kidney exchange model, in which a patient may enroll with up to two proxy donors and the exchange is allowed to utilize both, provided the patient receives a compatible transplant. This paradigm allows a patient to participate in an exchange cycle via one donor while their second donor initiates a donation chain—unlocking additional transplants without relying on altruistic donors. We formulate the resulting combinatorial optimization problems under realistic cycle and chain constraints, showing that utilizing both donors introduces sharp computational challenges compared to standard exchange, and we design new approximation algorithms with provable guarantees. We also study strategic manipulation when patients hide a donor, establishing the limits of deterministic mechanisms and showing how randomization achieves truthfulness with near-optimal welfare. Finally, our experiments on realistic synthetic graphs demonstrate that using both donors yields substantial gains in total transplants over selecting at most one—even when only a small fraction of patients enroll with two donors.

\paragraph{Keywords:} kidney exchange, computational complexity, approximation algorithms, incentive compatibility
\end{abstract}

\section{Introduction}

Kidney transplantation is widely regarded as the most effective treatment for patients with end-stage renal disease~\cite{wolfe1999comparison}. However, due to the severe shortage of deceased-donor organs, living kidney donation has become increasingly important~\cite{biro2019building}. In many cases, a willing living donor is biologically incompatible with their intended recipient due to ABO blood-type mismatch or human leukocyte antigen (HLA) tissue cross-match incompatibility~\cite{rees2009nonsimultaneous, biro2021modelling}. To address this inefficiency, kidney exchange (KE) programs allow incompatible patient--donor pairs to exchange donors with other pairs, enabling mutually compatible transplants~\cite{roth2004kidney, roth2005pairwise}.

The standard theoretical model of kidney exchange represents patient--donor pairs as nodes in a directed compatibility graph, where a directed edge indicates that the donor of one pair is compatible with the patient of another. Directed cycles correspond to feasible exchanges in which each patient receives a kidney. This framework was extended to incorporate non-directed (altruistic) donors, giving rise to chains~\cite{roth2006utilizing, rees2009nonsimultaneous}, and to algorithmic formulations that maximize the number of transplants subject to operational constraints~\cite{abraham2007clearing}. These models have had substantial practical impact: large-scale kidney exchange programs now operate across the US, the UK, and Europe, and periodically clear exchanges using integer programming and related optimization techniques~\cite{manlove2012paired, constantino2013kidney, biro2019building, ashlagi2021operations}. Due to logistical and ethical constraints, real-world programs restrict cycle lengths (typically to at most two or three), since all transplants within a cycle must be performed simultaneously to avoid the risk of donor withdrawal, and similarly bound chain lengths to limit failure propagation~\cite{biro2021modelling}. These restrictions give rise to challenging combinatorial optimization problems, many of which are NP-hard and even APX-hard~\cite{abraham2007clearing, biro2009maximum, constantino2013kidney, ashlagi2021operations}.

Beyond efficiency, kidney exchange has become a canonical setting for studying incentives and market design in matching markets without money. Because the compatibility graph relies on private information reported by participants, a rich literature investigates strategic behavior by patients and hospitals, trade-offs between efficiency and truthfulness, and the design of mechanisms that discourage misreporting or strategic donor withholding~\cite{roth2005pairwise, ashlagi2014free, hatfield2012strategyproof, feigenbaum2021unbalanced}.

\subsection{Dual-donor kidney exchange}

The classical kidney exchange model assumes that each patient is associated with a single donor. While analytically convenient, this assumption is increasingly misaligned with practice, where a patient may have access to multiple willing but incompatible donors, such as multiple family members. Existing models partially accommodate such situations via \emph{alternative donors}, where a patient reports multiple donors to increase compatibility options, but at most one donor is selected in the exchange~\cite{constantino2013kidney,ashlagi2021operations}. While this preserves the structure of the standard kidney exchange problem, it fails to exploit the full potential of multi-donor participation.

A more expressive possibility is to allow multiple donors per patient to donate simultaneously, provided that their associated patient receives a kidney---a paradigm also referred to as \emph{unbalanced matching}~\cite{feigenbaum2021unbalanced, feigenbaum2022strategyproofness}. This many-for-one exchange paradigm allows a patient--donor group to contribute an additional kidney to the pool in return for being served. Crucially, a patient may participate in an exchange cycle via one donor while their second donor initiates a donation chain---effectively acting as an endogenous altruistic donor even in the absence of non-directed donors. This additional flexibility can unlock exchanges that would otherwise be infeasible, benefiting both the multi-donor patient (by making them more attractive to match) and the broader patient population. At the same time, it fundamentally alters the structure of kidney exchange: the problem is no longer a pure cycle-packing problem, and while standard formulations~\cite{abraham2007clearing} remain applicable, they fail to exploit the full potential of the additional proxy donors. 

In this paper, we formulate and study the \emph{dual-donor kidney exchange (DDKE)} model, in which each patient may enroll with up to two proxy donors, and the exchange mechanism may utilize both donors subject to the constraint that a proxy donor donates only if their associated patient receives a compatible kidney. Focusing on at most two donors per patient captures the most realistic multi-donor scenario in practice, while making our computational hardness and strategic impossibility results as strong as possible. Our model strictly generalizes both the classical patient--donor pair model and the alternative-donors model.

We investigate the DDKE model from computational, algorithmic, strategic, and empirical perspectives. Formulating the optimization problem as $(k,\ell)$-DDKE---parameterized by maximum cycle length $k$ and chain length $\ell$---we observe that while the chain-free case ($\ell=0$) reduces to standard kidney exchange with alternative donors (which is solvable in polynomial time for $k=2$ and $k=\infty$), allowing chains renders all natural variants computationally hard: extending even the tractable $2$-cycle or unconstrained-cycle settings to include chains of length just $\ell=1$ makes $(k,1)$-DDKE APX-hard for all $k \ge 2$, while unbounded chains ($\ell=\infty$) lead to strong inapproximability results. For the practically important $(2,1)$-DDKE variant, where ignoring chains yields a simple $1/2$-approximation, we leverage $k$-set packing to design a non-trivial polynomial-time algorithm with an approximation ratio approaching $36/67 \approx 0.537$. From a strategic perspective, patients have a natural incentive to hide a second donor if one suffices to secure a transplant; we prove that no deterministic truthful mechanism can achieve an approximation ratio better than $1/2$, whereas randomization restores truthfulness in expectation while approaching optimal social welfare (i.e., serving the maximum number of patients). Finally, experiments on realistic synthetic compatibility graphs comparing $(2,1)$-DDKE against the alternative-donor baseline $(2,0)$-DDKE demonstrate that utilizing both donors yields substantial gains in total transplants, even when only a small fraction of patients enroll with two donors.

\subsection{Related work and comparison}

The foundations of kidney exchange were laid by Roth, S\"onmez, and \"Unver~\cite{roth2004kidney, roth2005pairwise}, who modeled exchange using cycles among incompatible patient--donor pairs and established core theoretical properties. Subsequent work incorporated algorithmic perspectives, focusing on maximizing the number of transplants under operational constraints. Abraham, Blum, and Sandholm~\cite{abraham2007clearing} introduced graph-based formulations and showed that while exchanges restricted to $2$-cycles (or with unbounded cycle length) admit polynomial-time algorithms, restricting exchanges to cycles of bounded length $k \ge 3$ renders the problem NP-complete. A substantial literature has since studied algorithmic approaches to kidney exchange, including integer programming formulations, heuristic and exact solvers, and extensions incorporating altruistic donors and chains~\cite{constantino2013kidney, anderson2015finding, ashlagi2021operations}.

From a complexity-theoretic perspective, kidney exchange problems exhibit sharp tractability boundaries. While $2$-cycle and unconstrained-cycle exchanges are solvable in polynomial time, bounding cycle or chain lengths quickly leads to NP-hardness~\cite{abraham2007clearing} and APX-completeness~\cite{biro2009maximum}, with inapproximability bounds established via reductions from $k$-set packing~\cite{luo2016approximation}. These hardness results have motivated approximation algorithms based on set packing~\cite{biro2009maximum} as well as practical heuristics for large-scale pools. In contrast to classical exchange where $k=2$ and $k=\infty$ are tractable, we show that in our dual-donor setting even chains of length $\ell = 1$ induce APX-hardness across all cycle bounds $k \ge 2$, and we show how to reduce bounded dual-donor structures to $k$-set packing~\cite{cygan2013improved} to improve upon the baseline $1/2$-approximation.

Kidney exchange has also served as a central application domain for mechanism design without money. Ashlagi and Roth~\cite{ashlagi2014free} and Hatfield~\cite{hatfield2012strategyproof} studied incentives, showing that worst-case efficiency and strategyproofness are incompatible when transplant centers can strategically withhold internal pairs. This impossibility motivated the study of approximate mechanism design in multi-hospital kidney exchange, characterizing the best approximation ratios achievable by deterministic and randomized truthful mechanisms~\cite{ashlagi2015mix, caragiannis2015improved}. At the individual participant level, Roth et al.~\cite{roth2005pairwise} showed that in classical pairwise exchange where at most one alternative donor per patient is used, priority mechanisms are donor-monotonic, giving patients no incentive to hide backup donors. In our dual-donor setting, we show that while lexicographic welfare maximization already prevents an unmatched patient from becoming matched by hiding a donor (\emph{weak truthfulness}), a matched patient still has a strong incentive to hide a second donor if one donor suffices to secure a kidney, leading to a sharp separation between deterministic and randomized truthful mechanisms.

Several papers consider generalizations of kidney exchange in which participants contribute multiple donors or belong to larger groups. Farina, Dickerson, and Sandholm~\cite{farina2017operation} introduced clubs and operation frames, providing integer programming formulations where donors donate on behalf of sets of patients. Ergin, S\"onmez, and \"Unver~\cite{ergin2017dual} studied dual-donor organ exchange from an economic perspective; however, their model addresses dual-graft transplants (such as liver or lung lobes) where a single patient requires simultaneous donations from two donors, whereas in our model each patient requires a single kidney and their two proxy donors can supply kidneys to two distinct recipients. Closest to our setting, Feigenbaum~\cite{feigenbaum2021unbalanced, feigenbaum2022strategyproofness} formulated the concept of \emph{unbalanced matchings} for \emph{overloaded patients} who enroll with multiple proxy donors, focusing primarily on strategic behavior under random transplant cancellations. Building on this direction, we parameterize the failure-free dual-donor problem by cycle and chain lengths $(k,\ell)$ and establish tight worst-case complexity, approximation, and truthfulness guarantees.

Empirical and synthetic evaluations play an important role in assessing the practical relevance of kidney exchange models. Building on the clinical simulation distributions of Saidman et al.~\cite{saidman2006increasing}, Toulis and Parkes~\cite{toulis2011random} formalized a random graph model capturing blood-type and tissue compatibility that has been widely adopted in experimental studies. Our experimental evaluation extends their framework to incorporate dual proxy donors.

\section{Preliminaries}\label{sec:prelim}
We consider extensions of the standard setting where each patient enrolls in the kidney exchange program together with an intended but incompatible donor. In our case, each patient may enroll in the program with either none, one, or two intended but incompatible donors. We represent the input by a directed graph (see Figure~\ref{fig:input-1}). The graph has two sets of nodes: the patient nodes and the donor nodes, illustrated with circles and squares, respectively. The directed edges belong to two categories as well. An edge from a patient node to a donor node indicates a patient who brings a donor to the system; we will refer to such a donor as a proxy donor of that patient. An edge from a donor node to a patient node indicates that a donor is compatible (and can donate their kidney) to the patient. In the example of Figure~\ref{fig:input-1}, there are six patients $p_1, p_2, \dots, p_6$. Patients $p_1$, $p_2$, $p_3$ and $p_5$ have one proxy donor each ($d_1$, $d_2$, $d_3$, and $d_5$, respectively). Patient $p_4$ has two proxy donors, namely donors $p_{4,1}$ and $p_{4,2}$. Patient $p_6$ has no proxy donor. The directed edges from donors to patients indicate all compatibilities. For example, donor $d_1$ is compatible with patients $p_2$ and $p_4$. 

\tikzset{
  pics/combined-example-1/.style={
    code={
        \node[patient] (p1) at ( -1, -1 ) {$p_1$};
        \node[patient] (p2) at ( -3, 1 ) {$p_2$};
        \node[patient] (p3) at ( -5, -1 ) {$p_3$};
        \node[patient] (p4) at ( 1, 1 ) {$p_4$};
        \node[patient] (p5) at ( 3, -1) {$p_5$};
        \node[patient] (p6) at ( 5, 1) {$p_6$};

        \node[donor] (d1) at (-1, 1) {$d_1$};
        \node[donor] (d2) at (-5, 1) {$d_2$};
        \node[donor] (d3) at (-3, -1) {$d_3$};
        \node[donor] (d4-1) at (1, -1) {$d_{4,1}$};
        \node[donor] (d4-2) at (3, 1) {$d_{4,2}$};
        \node[donor] (d5) at (5, -1) {$d_5$};

        \draw[kidneyarcs]
            (p1) edge (d1) 
            (d1) edge (p2) 
            (p2) edge (d2) 
            (d2) edge (p3) 
            (p3) edge (d3) 
            (d3) edge (p1)
            (d1) edge (p4) 
            (p4) edge (d4-1) edge (d4-2)
            (d4-1) edge (p1) 
            (d4-2) edge (p5)
            (p5) edge (d5) 
            (d5) edge (p6) 
        ;
        }
    }
}

\begin{figure}[ht]
\centering
\begin{tikzpicture}[scale=0.65, transform shape]
  \pic {combined-example-1};
\end{tikzpicture}
\caption{An example of an input graph in our dual donor kidney exchange setting.}
\label{fig:input-1}
\end{figure}

We may follow the traditional approach in kidney exchange and compute the cycle of transplantations of maximum length. In our example, such a cycle is formed by nodes $p_1$, $d_1$, $p_2$, $d_2$, $p_3$, $d_3$, and $p_1$ and the directed edges between consecutive nodes in this order (see~Figure~\ref{fig:two-solutions-1}, left), implying the following three transplantations: the proxy donor $d_1$ of patient $p_1$ donates their kidney to patient $p_2$, the proxy donor $d_2$ of patient $p_2$ donates their kidney to patient $p_3$, and the proxy donor $d_3$ of patient $p_3$ donates their kidney to patient $p_1$. We will refer to such a cycle as a cycle of length three (due to the fact that three patients are served with a kidney).

However, there is another set of transplantations that serves four patients. This is possible by exploiting the fact that patient $p_4$ has two proxy donors that can be used in transplantations, provided that patient $p_4$ is served with a kidney. So, in a smaller cycle that serves two patients, the proxy donor $d_1$ of patient $p_1$ donates to patient $p_4$, whose first proxy donor $d_{4,1}$ donates to patient $p_1$. The second proxy donor $d_{4,2}$ of patient $p_4$ donates their kidney to patient $p_5$, whose proxy donor $d_5$ donates to patient $p_6$. This set of transplantations is depicted at the right of Figure~\ref{fig:two-solutions-1}. Essentially, the patient $p_4$ uses their first proxy donor $d_{4,1}$ to form a cycle of transplantations and their second proxy donor $d_{4,2}$ to initiate a chain of transplantations. This example suggests that simply covering the maximum number of patients in cycles may not be the optimal solution in our setting. 

\begin{figure}[ht]
\centering
\begin{subfigure}{0.45\textwidth}
\centering
\begin{tikzpicture}[scale=0.65, transform shape]
  \pic {combined-example-1};
  \draw[boldkidneyarcs]
    (p1) edge (d1)
    (d1) edge (p2)
    (p2) edge (d2)
    (d2) edge (p3)
    (p3) edge (d3)
    (d3) edge (p1)
    ;
\end{tikzpicture}
\end{subfigure}
\hspace{5mm}
\begin{subfigure}{0.45\textwidth}
\centering
\begin{tikzpicture}[scale=0.65, transform shape]
  \pic {combined-example-1};
  \draw[boldkidneyarcs]
    (p1) edge (d1)
    (d1) edge (p4)
    (p4) edge (d4-1) edge (d4-2)
    (d4-1) edge (p1)
    (d4-2) edge (p5)
    (p5) edge (d5)
    (d5) edge (p6)
    ;
\end{tikzpicture}
\end{subfigure}
\caption{Serving the longest cycle containing three patients (left) or a pair of patients in a cycle and a chain containing two more patients (right) for the input graph of Figure~\ref{fig:input-1}.}
\label{fig:two-solutions-1}
\end{figure}

The input graph of Figure~\ref{fig:input-1} contains cycles of different lengths. However, we can make a similar observation in the case that either the input graph does not contain cycles of length higher than two, or we are not allowed to form transplantation cycles of length higher than two. Indeed, consider the next input graph depicted in the two solutions of Figure~\ref{fig:two-solutions-2}. There are three $2$-cycles that overlap. The left part of Figure~\ref{fig:two-solutions-2} depicts the solution consisting of the two disjoint cycles serving the four patients $p_1$, $p_2$, $p_3$, and $p_4$. On the right part of Figure~\ref{fig:two-solutions-2}, we can see a solution that exploits the fact that patients $p_1$ and $p_2$ have two proxy donors each. The solution consists of the $2$-cycle including patients $p_1$ and $p_2$ (and donors $d_{1,2}$ and $d_{2,2}$) and two chains that originate from the other proxy donors $d_{1,1}$ and $d_{2,1}$ of these patients. This solution serves six patients overall and is optimal. It can be verified that no other solution can serve more than five patients.

\tikzset{
  pics/combined-example-2/.style={
    code={
        \node[patient] (p1) at ( 0, 2 ) {$p_1$};
        \node[patient] (p2) at ( -2, 0 ) {$p_2$};
        \node[patient] (p3) at ( 2, 4 ) {$p_3$};
        \node[patient] (p4) at ( -4, -2 ) {$p_4$};
        \node[patient] (p5) at ( 2, 0) {$p_5$};
        \node[patient] (p6) at ( 0, -2) {$p_6$};
        \node[patient] (p7) at ( 4, 2) {$p_7$};
        \node[patient] (p8) at ( -2, -4) {$p_8$};

        \node[donor] (d1-1) at (2, 2) {$d_{1,1}$};
        \node[donor] (d1-2) at (0, 0) {$d_{1,2}$};
        \node[donor] (d2-1) at (-2, -2) {$d_{2,1}$};
        \node[donor] (d2-2) at (-2, 2) {$d_{2,2}$};
        \node[donor] (d3) at (0, 4) {$d_3$};
        \node[donor] (d4) at (-4, 0) {$d_4$};
        \node[donor] (d5) at (4, 0) {$d_5$};
        \node[donor] (d6) at (0, -4) {$d_6$};

        \draw[kidneyarcs]
            (p1) edge (d1-1) edge (d1-2) 
            (p2) edge (d2-1) edge (d2-2) 
            (p3) edge (d3)
            (p4) edge (d4)
            (p5) edge (d5)
            (p6) edge (d6)

            (d1-1) edge (p3) edge (p5)
            (d1-2) edge (p2)
            (d2-1) edge (p4) edge (p6)
            (d2-2) edge (p1)
            (d3) edge (p1)
            (d4) edge (p2)
            (d5) edge (p7)
            (d6) edge (p8)
        ;
        }
    }
}

\begin{figure}[ht]
\centering
\begin{subfigure}{0.45\columnwidth}
\centering
\begin{tikzpicture}[scale=0.65, transform shape]
  \pic {combined-example-2};
  \draw[boldkidneyarcs]
    (p1) edge (d1-1)
    (d1-1) edge (p3)
    (p3) edge (d3)
    (d3) edge (p1)
    (p2) edge (d2-1)
    (d2-1) edge (p4)
    (p4) edge (d4)
    (d4) edge (p2)
    ;
\end{tikzpicture}
\end{subfigure}
\begin{subfigure}{0.45\columnwidth}
\centering
\begin{tikzpicture}[scale=0.65, transform shape]
  \pic {combined-example-2};
  \draw[boldkidneyarcs]
    (p1) edge (d1-2) edge (d1-1)
    (d1-2) edge (p2)
    (p2) edge (d2-1) edge (d2-2)
    (d2-2) edge (p1)
    (d1-1) edge (p5)
    (d2-1) edge (p6)
    (p5) edge (d5)
    (p6) edge (d6)
    (d5) edge (p7)
    (d6) edge (p8)
    ;
\end{tikzpicture}
\end{subfigure}
\caption{Serving the maximum number of two $2$-cycles (left) and a single $2$-cycle that allows to serve two long chains (right).}\label{fig:two-solutions-2}
\end{figure}

\section{Problem variants and their complexity}\label{sec:complexity}
In this section, we define formally variants of the problem we consider in the rest of the paper and present corresponding complexity results. 
We begin with the definition of the input.

\begin{definition}[Kidney exchange graph]
A \emph{kidney exchange graph} is a directed graph $G=(V,E)$, where the set of nodes $V$ is the disjoint union of a set of patients $V_P$ and a set of donors $V_D$. The directed edges either go from a patient node~$p$ to a donor node~$d$, in which case~$d$ is a proxy donor for $p$, or from a donor node~$d$ to a patient node~$p$, in which case~$d$ is donor compatible with~$p$. The outdegree of a patient node~$p$ is thus the number of proxy donors enrolled with~$p$. A donor node has indegree at most~1, meaning that~$d$ is either an altruistic donor or enrolls with a specific patient.
\end{definition}

In the model we study each patient enrolls with at most~$2$ proxy donors and no donors are altruistic. Altruistic donors can however be \emph{simulated} in our model by a 2-cycle of two patients~$p_1$ and~$p_2$ and their corresponding proxy donors~$d_1$ and~$d_{2,1}$ and then giving $p_2$ an additional proxy donor $d_{2,2}$, see Figure~\ref{fig:altruistic-donor-gadget}, where the simulated altruistic donor node is identified using a double-lined square.

\begin{figure}[ht]
\centering
\begin{tikzpicture}[scale=0.65, transform shape]
    \node[patient] (p1) at (0,0) {$p_1$};
    \node[patient] (p2) at (2,2) {$p_2$};
    \node[donor] (d1) at (2,0) {$d_1$};
    \node[donor] (d2-1) at (0,2) {$d_{2,1}$};
    \node[outdonor] (d2-2) at (4,2) {$d_{2,2}$};    

    \draw[kidneyarcs] 
    (p1) edge (d1)
    (d1) edge (p2)
    (p2) edge (d2-1) edge (d2-2)
    (d2-1) edge (p1)
    ;
\end{tikzpicture}
\caption{Altruistic donor gadget.}
\label{fig:altruistic-donor-gadget}
\end{figure}

We are interested in \emph{cycles} and \emph{chains} of kidney exchanges.
\begin{definition}
Given a kidney exchange graph $G$, a donor \emph{cycle} is simply a directed cycle in~$G$ and a donor \emph{chain} is a directed path connecting a donor node to a patient node. The \emph{length} of a donor cycle or donor chain is the number of patient nodes they contain.
\end{definition}
Cycles and chains form the basic building blocks of kidney exchange allocations.\footnote{We often use the term ``solution'' as an alternative to ``allocation''.}
\begin{definition}[Kidney exchange allocations]
A kidney exchange \emph{allocation} for a given kidney exchange graph $G$ is a subgraph $M$ of $G$ such that: 
\begin{enumerate*}[label=(\roman*)]\item every patient node $p$ in $M$ has indegree~$1$ and \item every donor node $d$ in $M$ has outdegree~$1$ and has the same indegree as in $G$.
\end{enumerate*}
The \emph{cycle length} of an allocation $M$ is the length of the longest donor cycle in the allocation. The \emph{chain length} of an allocation is the length of the longest donor chain in the allocation obained from $M$ by removing all donor cycles.
\end{definition}
The basic computational problem in a kidney exchange graph is to compute an allocation containing the largest possible number of patients. We are interested in the problem where each patient have~0, 1, or~2~proxy donors and there are no altruistic donors. We parameterize this problem in terms of cycle and chain length.
\begin{definition}[Dual donor kidney exhange problem]
Let $k,\ell \in \NN \cup \{\infty\}$. The $\DDKE{k}{\ell}$ problem is, given a kidney exchange graph where the outdegree of each patient is at most~$2$ and the indegree of every donor node is~$1$, to compute a kidney exchange allocation of cycle length at most~$k$ and chain length at most~$\ell$ such that the number of patient nodes in the allocation is maximized.
\end{definition}
\begin{remark}
    The $\DDKE{k}{0}$ problem may be viewed simply as the classical paired kidney exchange problem with exchange cycles of length at most~$k$. Since no chains are possible, patient nodes of outdegree~0 can safely be removed and for patient nodes with outdegree~2 we may simply merge the associated two donor nodes. This observation settles the computational complexity of $\DDKE{k}{0}$, since it can be solved in polynomial time for $k=2$ and $k=\infty$~\cite{abraham2007clearing,biro2009maximum}, whereas for every fixed $3\leq K < \infty$, it is APX-complete~\cite{biro2009maximum}. For $k=3$, specifically, Luo~et~al.~\cite{luo2016approximation} proved that the problem is NP-hard to approximatete within~$\frac{698}{697}-\eps$, for any constant $\eps>0$.
\end{remark}
In the remainder of this section we give hardness results for the cases where $\ell>0$. These results are obtained by reductions built from kidney exchange graph \emph{gadgets}. 
Even though we ultimately don't allow altruistic donors, it is useful to have donor nodes of 
indegree~$0$ serving as \emph{input nodes}. Other donor nodes will serve as \emph{output nodes}, and gadgets are combined by identifying input and output nodes of the respective gadgets. In this way a gadget is only connected to the rest of the graph via directed edges  to inputs and or from outputs. The same node never serves as both an input node and an output node.

In figures we illustrate input nodes by thick-lined squares and output nodes by double-lined squares.
\begin{definition}[Used and provided donors]
   Let $G=(V,E)$ be a kidney exchange graph with a set $I \subseteq V_D$ of \emph{input nodes} each of indegree~$0$, a set $O \subseteq V_D$ of \emph{output nodes}, and consider an allocation $M$ in $G$. When $d \in I \cap M$ we say that $M$ \emph{uses} the input node $d$. When $p \in V_P \cap M$, $d \in O \setminus M$, and $(p,d) \in E$ we say that $M$ \emph{provides} the output node $d$.
\end{definition}

When gadgets are combined to form a larger kidney exchange graph there is a natural relationship between allocations in the gadgets and in the combined graph.
\begin{definition}
Let $G_1=(V_1,E_1)$ and $G_2=(V_2,E_2)$ be kidney exchange graphs such that $V_1 \cap V_2 = I \cup O$, where the nodes in $I$ are input nodes of $G_1$ and output nodes of $G_2$ and the nodes in $O$ are output nodes of $G_1$ and input nodes of $G_2$.

An allocation $M$ in the combined graph $G_1 \cup G_2$ naturally \emph{induces} allocations $M_1$ and $M_2$ in $G_1$ and $G_2$ as follows. Let $I'$ be the donor nodes in $I$ with a directed edge to a patient node of $G_1$ and let $O'$ be the donor nodes in $O$ with a directed edge to a patient node of $G_2$. Then $M_1=(G_1 \cap M) \setminus O'$ is an allocation in $G_1$ using $I'$ and providing $O'$ and  $M_2=(G_2 \cap M) \setminus I'$ is an allocation in $G_2$ using $O'$ and providing $I'$.

Converserly, if $M_1$ is an allocation of $G_1$ using $I' \subseteq I$ and providing $O' \subseteq O$ and $M_2$ is an allocation of $G_2$ using $O'$ and providing $I'$, the (unique) allocation $M$ of $G_1 \cup G_2$ that induces $M_1$ and $M_2$ is the \emph{combined} allocation of $M_1$ and $M_2$.
\end{definition}

Our first result is that the $\DDKE{k}{1}$ is APX-hard for every $2 \leq k \leq \infty$ and the result is based on the gadget illustrated in Figure~\ref{fig:combined-choice-duplication-gadget} and whose maximal allocations are analyzed below. 

\tikzset{
  pics/combined-gadget-base/.style={
    code={
        \node[patient] (p1) at ( 0, 1.5) {$p_1$};
        \node[patient] (p2) at ( 0,-1.5) {$p_2$};
        \node[patient] (p3) at (-4, 0) {$p_3$};
        \node[patient] (p4) at ( 4, 0) {$p_4$};
   
        \node[outdonor] (d1-1) at (-2.5, 1.5) {$d_{1,1}$};
        \node[outdonor] (d2-2) at (-2.5,-1.5) {$d_{2,2}$};
        \node[outdonor] (d2-1) at (2.5,-1.5) {$d_{2,1}$};
        \node[outdonor] (d1-2) at (2.5, 1.5) {$d_{1,2}$};

        \node[donor] (d3) at (-1.5, 0) {$d_3$};
        \node[donor] (d4) at ( 1.5, 0) {$d_4$};
    }
  }
}

\begin{figure}[ht]
\centering
\begin{tikzpicture}[scale=0.65, transform shape]
  \pic {combined-gadget-base};
  \draw[kidneyarcs]
      (p1) edge (d1-1) edge (d1-2)
      (p2) edge (d2-1) edge (d2-2)
      (p3) edge (d3)
      (p4) edge (d4)
      (d1-1) edge (p3)
      (d1-2) edge (p4)
      (d2-1) edge (p4)
      (d2-2) edge (p3)
      (d3) edge (p1) edge (p2)
      (d4) edge (p1) edge (p2)
      ;
\end{tikzpicture}
\caption{Combined choice and duplication gadget.}
\label{fig:combined-choice-duplication-gadget}
\end{figure}

\begin{lemma}[Combined choice and duplication gadget]
    Let $G$ be the kidney exchange graph given in Figure~\ref{fig:combined-choice-duplication-gadget} having no input nodes, 4~patient nodes $p_1$,$p_2$,$p_3$,$p_4$, and 4~output donor nodes $d_{1,1}$,$d_{1,2}$,$d_{2,1}$,$d_{2,2}$. We have the following.
    \begin{enumerate}
    \item There are~2 allocations of cycle length~$2$ and no chains, that contains all patient nodes and each of these providing~2 output nodes. One provides $d_{1,1}$ and $d_{2,1}$ and the other provides $d_{1,2}$ and $d_{2,2}$, as shown in Figure~\ref{fig:combined-choice-duplication-gadget-allocations-short-d1-1-d2-1} and Figure~\ref{fig:combined-choice-duplication-gadget-allocations-short-d1-2-d2-2}.
    \item No allocation of chain length at most~1 provides nodes from both of the sets $\{d_{1,1}$ and $d_{2,1}\}$ and $\{d_{1,2}$ and $d_{2,2}\}$.
    \end{enumerate}
    \label{lem:gadget-analysis}
\end{lemma}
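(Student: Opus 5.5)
The plan is to get both parts from one structural observation and a short case analysis. The gadget has no input nodes, and an allocation may only contain a donor if it contains that donor's patient. So every patient of an allocation $M$ must receive its kidney from a donor node inside the gadget. Reading off the edges of Figure~\ref{fig:combined-choice-duplication-gadget}:
\begin{itemize}
\item $p_1$ and $p_2$ can only receive from $d_3$ or $d_4$;
\item $p_3$ can only receive from $d_{1,1}$ or $d_{2,2}$;
\item $p_4$ can only receive from $d_{1,2}$ or $d_{2,1}$.
\end{itemize}
I will also use that a donor in $M$ has outdegree $1$ in $M$, so each of $d_3,d_4$ serves at most one of $p_1,p_2$.

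\textbf{Part 1.} Suppose $M$ contains all four patients, has cycle length at most $2$ and has no chains. Then $p_1$ and $p_2$ receive from distinct donors, so both $d_3,d_4$ are used. Since $M$ has no chains, every patient lies on a donor cycle and uses exactly one of its proxy donors. Hence the patients are partitioned into $2$-cycles. The only $2$-cycles in the gadget are
\[
p_1\!-\!d_{1,1}\!-\!p_3\!-\!d_3,\quad p_1\!-\!d_{1,2}\!-\!p_4\!-\!d_4,\quad p_2\!-\!d_{2,1}\!-\!p_4\!-\!d_4,\quad p_2\!-\!d_{2,2}\!-\!p_3\!-\!d_3 .
\]
They form a $4$-cycle in the "which pairs of patients" sense, so there are exactly two perfect matchings.
\begin{itemize}
\item $\{p_1p_3,\ p_2p_4\}$ uses $d_{1,1},d_{2,1}$ and hence provides $d_{1,2},d_{2,2}$.
\item $\{p_1p_4,\ p_2p_3\}$ uses $d_{1,2},d_{2,2}$ and hence provides $d_{1,1},d_{2,1}$.
\end{itemize}
Verifying that these are allocations is immediate.

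\textbf{Part 2.} This is the main step. Suppose $M$ provides a node from $\{d_{1,1},d_{2,1}\}$ and a node from $\{d_{1,2},d_{2,2}\}$. I first force all four patients into $M$.
\begin{itemize}
\item Some patient in $\{p_1,p_2\}$, say $p_1$, is in $M$. It receives from $d_3$ or $d_4$, so $p_3$ or $p_4$ is in $M$. That patient receives from a proxy donor of $p_1$ or of $p_2$.
\item If only $p_1$ were among $\{p_1,p_2\}$ in $M$, both provided nodes would be $p_1$'s donors $d_{1,1},d_{1,2}$. Then no donor of $p_1$ or $p_2$ is available, a contradiction. The case of $p_2$ alone is symmetric.
\item So $p_1,p_2\in M$. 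Then $d_3,d_4\in M$, so $p_3,p_4\in M$.
\end{itemize}
Now $p_3$ and $p_4$ each consume exactly one of the four outer donors, so exactly two outer donors are used and two are provided. The two provided nodes must lie one in each set.
\begin{itemize}
\item If the used pair is $\{d_{1,1},d_{2,1}\}$ or $\{d_{1,2},d_{2,2}\}$, the provided nodes both lie in one set. These cases are excluded.
\item The remaining cases are used pair $\{d_{1,1},d_{1,2}\}$ (both from $p_1$) or $\{d_{2,1},d_{2,2}\}$ (both from $p_2$).
\end{itemize}

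For used pair $\{d_{1,1},d_{1,2}\}$, $p_1$ sends kidneys to both $p_3$ and $p_4$, while $d_3,d_4$ go to $p_1,p_2$ in some order.
\begin{itemize}
\item If $d_3\to p_1$, the cycle is $p_1,p_3$, and the rest is the chain $d_{1,2}\to p_4\to d_4\to p_2$ with two patients.
\item If $d_4\to p_1$, the chain is $d_{1,1}\to p_3\to d_3\to p_2$, again with two patients.
\end{itemize}
The case $\{d_{2,1},d_{2,2}\}$ is symmetric under swapping the indices $1\leftrightarrow 2$ on the patients $p_1,p_2$. Either way the chain length is $2>1$, a contradiction.

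The point needing most care is the chain-length accounting: I must confirm, from the definition applied after removing donor cycles, that the chain starting at the second proxy donor of $p_1$ (or $p_2$) really contains two patient nodes.
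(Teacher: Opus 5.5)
Your proof is correct, and it is organized differently from the paper's. The paper argues forward. It fixes, by symmetry, a cycle through $p_1$ using the edge $(p_1,d_{1,1})$ and notes that this cycle must pass through $p_3,d_3$. The cycle then closes either as a $2$-cycle or as the $4$-cycle through $p_2,d_{2,1},p_4,d_4$. Finally it lists what can be added to the $2$-cycle: a second $2$-cycle, the chain $d_{1,2}\to p_4$ (which provides no output), or the $2$-chain $d_{1,2}\to p_4\to d_4\to p_2$. Both parts are then read off this classification of allocations.

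You instead argue backward from the hypothesis of Part~2. Providing one node from each set forces $p_1,p_2\in M$, hence $d_3,d_4\in M$, hence $p_3,p_4\in M$. So exactly one outer donor into each of $p_3$ and $p_4$ is used, and the only such pairs consistent with the hypothesis are the two proxies of a single patient, which forces a $2$-chain.

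Your route does not enumerate allocations cycle by cycle. It does not rely on the fact, implicit in the paper, that every nonempty allocation of this input-free gadget contains a cycle. It also treats all cycle lengths uniformly: the $4$-cycles are excluded by the counting step because they use $\{d_{1,1},d_{2,1}\}$ or $\{d_{1,2},d_{2,2}\}$. Your perfect-matching argument also makes the ``exactly two'' claim in Part~1 explicit.

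In exchange, the paper's enumeration gives a fuller picture of the gadget. In particular it yields the side fact used in the proof of Theorem~\ref{thm:DDKE-k-1-hardness}: an allocation containing a length-$1$ chain inside the gadget has at most $3$ patients and provides no output node. Your argument does not produce this directly, but the lemma as stated does not require it.
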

\begin{proof}
    Consider an allocation $M=(V_M,E_M)$ in $G$ with at least one cycle. By symmetry, we may analyze just the case of $p_1 \in V_M$ being in a cycle. Then we may, again by symmetry, analyze just the case of $(p_1,d_{1,1}) \in E_M$ being an edge of the cycle.   
    To complete the cycle, this necessitates that it continues through $p_3$ and $d_3$. The cycle may now be closed in precisely two ways, either directly to $p_1$ giving a 2-cycle, or through intermediate vertices $p_2$, $d_{2,1}$, $p_4$, and $d_4$ before returning to $p_1$, giving a 4-cycle.

    The 4-cycle uses all patient nodes, leaving no room for further cycles or chains, and provides the output nodes $d_{1,2}$ and $d_{2,2}$ as shown in Figure~\ref{fig:combined-choice-duplication-gadget-allocations-long-d1-2-d2-2}.

    The 2-cycle allows for a second 2-cycle which together provides the output nodes $d_{1,2}$ and $d_{2,2}$ as illustrated in Figure~\ref{fig:combined-choice-duplication-gadget-allocations-short-d1-2-d2-2}, leaving no room for additional cycles or chains. The 2-cycle also allows for a chain starting from $d_{1,2}$ followed by $p_4$. Ending the chain in $p_4$ provides $d_4$, but none of the outputs, and serves only~3 patients. The only alternative is to continue the chain through $d_4$ and ending in $p_2$, thereby serving all~4 patients, but doing so gives a chain of length~2.
\end{proof}

\begin{figure}[ht]
\centering
\begin{subfigure}{0.45\textwidth}
\centering
\begin{tikzpicture}[scale=0.65, transform shape]
  \pic {combined-gadget-base};
  \draw[kidneyarcs]
    (d1-1) edge (p3)
    (d3) edge (p1)
    (d2-1) edge (p4)
    (d4) edge (p2)
    ;
    
  \draw[boldkidneyarcs]
    (p1) edge[dashed] (d1-1)
    (p1) edge  (d1-2)
    (d1-2) edge (p4)
    (p4) edge (d4)
    (d4) edge (p1)

    (p2) edge[dashed] (d2-1)
    (p2) edge (d2-2)
    (d2-2) edge (p3)
    (p3) edge (d3)
    (d3) edge (p2)
    ;
\end{tikzpicture}
\caption{Two 2-cycles providing $d_{1,1}$ and $d_{2,1}$.}
\label{fig:combined-choice-duplication-gadget-allocations-short-d1-1-d2-1}
\end{subfigure}
\begin{subfigure}{0.45\textwidth}
\centering
\begin{tikzpicture}[scale=0.65, transform shape]
  \pic {combined-gadget-base};
  \draw[kidneyarcs]
    (d1-2) edge (p4)
    (d4) edge (p1)
    (d2-2) edge (p3)
    (d3) edge (p2)
    ;
    
  \draw[boldkidneyarcs]
    (p1) edge  (d1-1)
    (p1) edge[dashed] (d1-2)
    (d1-1) edge (p3)
    (p3) edge (d3)
    (d3) edge (p1)

    (p2) edge (d2-1)
    (p2) edge[dashed] (d2-2)
    (d2-1) edge (p4)
    (p4) edge (d4)
    (d4) edge (p2)
    ;
\end{tikzpicture}
\caption{Two 2-cycles providing $d_{1,2}$ and $d_{2,2}$.}
\label{fig:combined-choice-duplication-gadget-allocations-short-d1-2-d2-2}
\end{subfigure}
\vspace{5mm}

\begin{subfigure}{0.45\textwidth}
\centering
\begin{tikzpicture}[scale=0.65, transform shape]
  \pic {combined-gadget-base};
  \draw[kidneyarcs]
    (d1-1) edge (p3)
    (d3) edge (p2)
    
    (d2-1) edge (p4)
    (d4) edge (p1)
    ;
    
  \draw[boldkidneyarcs]
    (p1) edge[dashed] (d1-1)
    (p1) edge  (d1-2)    
    (d1-2) edge (p4)
    (p4) edge (d4)
    (d4) edge (p2)

    (p2) edge[dashed] (d2-1)
    (p2) edge (d2-2)
    (d2-2) edge (p3)
    (p3) edge (d3)
    (d3) edge (p1)
    ;
\end{tikzpicture}
\caption{Single 4-cycle providing $d_{1,1}$ and $d_{2,1}$.}
\label{fig:combined-choice-duplication-gadget-allocations-long-d1-1-d2-1}
\end{subfigure}
\begin{subfigure}{0.45\textwidth}
\centering
\begin{tikzpicture}[scale=0.65, transform shape]
  \pic {combined-gadget-base};
  \draw[kidneyarcs]
    (d1-2) edge (p4)
    (d4) edge (p2)
    
    (d2-2) edge (p3)
    (d3) edge (p1)
    ;
    
  \draw[boldkidneyarcs]
    (p1) edge  (d1-1)
    (p1) edge[dashed] (d1-2)
    (d1-1) edge (p3)
    (p3) edge (d3)
    (d3) edge (p2)

    (p2) edge (d2-1)
    (p2) edge[dashed] (d2-2)
    (d2-1) edge (p4)
    (p4) edge (d4)
    (d4) edge (p1)
    ;
\end{tikzpicture}
\caption{Single 4-cycle providing $d_{1,2}$ and $d_{2,2}$.}
\label{fig:combined-choice-duplication-gadget-allocations-long-d1-2-d2-2}
\end{subfigure}
\caption{Maximal allocations by cycles.}
\label{fig:combined-choice-duplication-gadget-allocations}
\end{figure}
To obtain hardness for $\DDKE{k}{1}$, we reduce from a bounded occurrence version of the MAXE3SAT problem. An E3SAT formula is a CNF formula in which every clause contains exactly~3 literals. An E3SAT formula is $\delta$-satiable if there exists a truth assignment to the variables such that the fraction of clauses satisfied is at least~$\delta$. Berman et al.~\cite{KPS03} obtained the following hardness result.
\begin{theorem}[Berman, Karpinski and Scott~{\cite[Theorem~2]{KPS03}}]
    For any $\eps>0$ it is NP-hard to distinguish E3SAT formulas, where every literal occurs exactly~2 times, that are at least $(1-\eps)$-satisfiable from those are at most $(1015/1016+\eps)$-satisfiable.
    \label{thm:Berman-Karpinski-Scott}
\end{theorem}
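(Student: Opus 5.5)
This statement is imported from the literature, not proved in the paper, so the plan is to reconstruct the standard argument of Berman, Karpinski and Scott. It is a gap-preserving reduction chain that starts from H\aa stad's optimal inapproximability result for linear equations. The starting point is MAX-E3-LIN-2. For every $\eps>0$ it is NP-hard to distinguish systems of equations $x\oplus y\oplus z=b$ over $\mathrm{GF}(2)$ in which a $(1-\eps)$-fraction of equations can be satisfied from those in which at most a $(1/2+\eps)$-fraction can. I will take this as the source problem and transform it in two stages: first bounding variable occurrences, then converting equations to clauses.

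\textbf{Stage 1: bounded occurrences.} Replace each variable $x$ that occurs $m$ times by $m$ copies $x^{(1)},\dots,x^{(m)}$, one per occurrence. Then tie the copies together with 2-variable consistency equations $x^{(i)}\oplus x^{(j)}=0$ placed along the edges of an \emph{amplifier}. I would use the wheel-amplifier construction of Berman and Karpinski: a cycle on contact and checker nodes, plus a random perfect matching on the checkers. The required property is that any assignment can be made consistent on the contact copies without decreasing the number of satisfied equations. With this amplifier every copy occurs exactly $3$ times in total: once in its original 3-variable equation (contacts only), and otherwise in consistency equations.

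The key lemma for this stage is amplifier soundness. For any subset $A$ of nodes, the number of consistency edges leaving $A$ is at least the number of contact nodes in the smaller side. Given this, a standard majority-rounding argument shows that the optimum of the hybrid instance equals the optimum of the original instance plus the total number of consistency equations. So the completeness and soundness values shift by the same additive constant, which is linear in the number of equations.

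\textbf{Stage 2: equations to E3 clauses, and the gap computation.} Encode each 3-variable parity equation by the $4$ E3 clauses forbidding its $4$ violating assignments. Encode each 2-variable equation $x=y$ by clauses of the form $(\bar x\vee y\vee u)$ and their complements, using a fresh padding variable $u$ so that every clause has exactly $3$ literals and $u$ contributes no advantage. Then choose the number of clauses per equation and the padding pattern so that every literal (positive and negative) appears \emph{exactly} twice. This is where the occurrence-$3$ property from Stage~1 is used: with the right gadget, the polarity occurrences balance.

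Finally, count clauses. A satisfied equation yields all of its clauses satisfied, and a violated one yields exactly one violated clause. The fraction of unsatisfied clauses therefore becomes $\Theta(\eps)$ in the YES case and at least $(1/2-\eps)\cdot m_3/N$ in the NO case, where $m_3$ is the number of 3-variable equations and $N$ is the total number of clauses. Plugging in the exact numbers of contact nodes, checker nodes and matching edges per original occurrence should produce the constant $1015/1016$.

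The main obstacle is the amplifier. One must show that wheel amplifiers with the required expansion exist (they are obtained by a probabilistic argument and are constructible in randomized polynomial time, or deterministically with care). Only then does the additive bookkeeping of Stages~1--2 yield the precise constant.
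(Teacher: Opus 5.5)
Since the paper only cites this result, the comparison is with the Berman--Karpinski--Scott argument. Your Stage~1 is the right starting point: H\aa stad's gap for E3-LIN-2, wheel amplifiers, and a hybrid system in which every variable lies in exactly three equations, with the optimum shifted by the number of consistency equations.

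The gap is Stage~2. This is where the actual content of the cited theorem lies, and as you describe it, it cannot be carried out.
\begin{itemize}
\item \textbf{Contacts.} The four clauses encoding $x\oplus y\oplus z=b$ already contain each of $x,\bar x,y,\bar y,z,\bar z$ exactly twice. A contact also lies on two cycle equations, so it exceeds the budget before any consistency clause is written.
\item \textbf{Checkers.} Here the obstruction does not depend on the gadget. Suppose a checker $x$ is a single SAT variable whose literals occur twice each. Flipping $x$ can falsify at most two further clauses, namely those containing the literal that becomes false, whatever the padding variables are. But flipping a checker away from three agreeing neighbours violates three equations. So the accounting you use in the final count cannot hold around a checker: a satisfied equation has all its clauses satisfied, and a violated equation has exactly one violated clause. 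The occurrence-$3$ property does not make the polarities balance.
\item \textbf{Padding.} Your padding is not advantage-free. With $(\bar x\vee y\vee u)$ and $(x\vee\bar y\vee\bar u)$, a disagreement between $x$ and $y$ falsifies at most one of the two clauses, and $u$ can be set to satisfy it. So violated consistency equations become free. With $(\bar x\vee y\vee u)$ and $(\bar x\vee y\vee\bar u)$ the cost is correct, but $\bar x$ and $y$ are spent twice on one implication.
\end{itemize}

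What is missing is the construction that actually yields symmetric instances, and it has four parts.
\begin{itemize}
\item Each hybrid variable must be split into several SAT variables, so that each copy has room for its share of the equation occurrences plus the linking clauses.
\item The copies must be tied together, for instance by a directed cycle of implications.
\item Every remaining $2$-literal clause must be widened to exactly three literals by a device whose auxiliary variables also occur exactly twice in each polarity and can never be used to cancel a violation.
\item The amplifier and majority argument must be redone for this new structure.
\end{itemize}
The last step is not automatic. An implication cycle $x_1\to x_2\to\dots\to x_r\to x_1$ is falsified only at its $1\to0$ transitions, so it charges only half of the disagreements the corresponding cycle of equalities charges. Copies that disagree can also make the parity clauses cheaper to satisfy.

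The constant $1015/1016$ comes from counting the clauses produced by these specific gadgets against the number of violations guaranteed in the NO case. So ``plugging in'' cannot be done before the gadgets are fixed.

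A smaller point: a randomized construction of the wheels gives hardness only under randomized reductions. For NP-hardness you need deterministic amplifiers. For instance, you can start from H\aa stad instances with bounded occurrences, so that the wheels have constant size and can be found by exhaustive search.
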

We now present our reduction.
\begin{theorem}\label{thm:reduction}
For any $2 \leq k \leq \infty$ and any $\eps>0$ it is NP-hard to distinguish between $\DDKE{k}{\ell}$ kidney exchange graphs for which there exists an allocation containing at least a fraction $1-\eps/4$ of the patient nodes from those where every allocation contains at most a fraction $4063/4064+\eps/4$ of the patient nodes. Consequently, it is NP-hard to approximate $\DDKE{k}{1}$ within $\frac{4064}{4063}-\eps$, for any $2 \leq k \leq \infty$, and any $\eps>0$.
\label{thm:DDKE-k-1-hardness}
\end{theorem}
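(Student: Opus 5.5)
We would reduce from the bounded-occurrence MAXE3SAT problem of Theorem~\ref{thm:Berman-Karpinski-Scott}. The plan is to use one copy of the combined choice and duplication gadget per variable and a single patient node per clause. Let $\phi$ have $n$ variables and $m$ clauses. Each literal occurs exactly twice, so each variable occurs four times and $4n=3m$. For every variable $x_i$ we take a fresh copy $G_i$ of the gadget of Lemma~\ref{lem:gadget-analysis}. The output nodes $d_{1,1},d_{2,1}$ of $G_i$ are labelled by the two occurrences of the positive literal $x_i$, and the outputs $d_{1,2},d_{2,2}$ by the two occurrences of $\bar x_i$. For every clause $C_j$ we add a patient node $c_j$ with no proxy donors. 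For each of its three literal occurrences we add an edge from the corresponding output node to $c_j$. The resulting graph has $4n+m$ patients, every patient has outdegree at most~$2$, and every donor has indegree~$1$, so it is a valid $\DDKE{k}{1}$ instance.

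\emph{Completeness.} Given an assignment satisfying a $\delta$ fraction of the clauses, in each $G_i$ we pick the allocation of part~1 of Lemma~\ref{lem:gadget-analysis} that provides the two outputs labelled by the \emph{true} literal of $x_i$. This uses only $2$-cycles, so it is valid for every $k\ge 2$. Each satisfied clause $C_j$ has at least one true-literal occurrence, whose output donor is provided and unused inside the gadget. We let that donor give to $c_j$. Each such donor starts from a patient lying on a cycle, so after removing cycles it forms a chain of length~$1$. Distinct clauses use distinct occurrences, so these chains are disjoint. This serves $4n+\delta m = (3+\delta)m$ of the $4m$ patients. Taking $\delta\ge 1-\eps$ gives a fraction $(4-\eps)/4 = 1-\eps/4$.

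\emph{Soundness.} Take any allocation $M$ of chain length at most~$1$. Each $c_j$ has outdegree~$0$, so if it is served it is the endpoint of a chain whose donor is an output node $d$ of some $G_i$. Since $d$ lies in $M$ with its indegree preserved, $d$'s patient is served. The chain-length bound forces that patient to lie on a donor cycle, since otherwise the chain would contain two patients. Hence the induced allocation $M_i$ in $G_i$ (which has no input nodes) has chain length at most~$1$ and provides $d$. By part~2 of Lemma~\ref{lem:gadget-analysis}, $M_i$ provides outputs from at most one of $\{d_{1,1},d_{2,1}\}$ and $\{d_{1,2},d_{2,2}\}$. We set $x_i$ so that this side is the true literal, choosing arbitrarily if $M_i$ provides nothing. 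Every served $c_j$ is then fed by a provided output labelled with a true literal, so $C_j$ is satisfied. Each gadget contributes at most $4$ served patients. So if $\phi$ is at most $\delta$-satisfiable, $M$ serves at most $(3+\delta)m$ patients. Taking $\delta = 1015/1016+\eps$ gives a fraction $(3\cdot1016+1015)/4064+\eps/4 = 4063/4064+\eps/4$.

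The gap $\bigl(1-\eps/4\bigr)/\bigl(4063/4064+\eps/4\bigr)$ tends to $4064/4063$ as $\eps\to0$, so rescaling $\eps$ yields the stated inapproximability for every $2\le k\le\infty$. I expect the main obstacle to be the soundness bookkeeping. One must check that the interaction with the clause nodes cannot create allocations inside a gadget beyond those covered by Lemma~\ref{lem:gadget-analysis}. The point to verify is that a clause node, having no proxy donors, can never pass a kidney back into any gadget. Consequently the induced allocation on each $G_i$ is a genuine allocation of chain length at most~$1$ in the stand-alone gadget, so part~2 of the lemma applies.
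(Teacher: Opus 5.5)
Your proof is correct and follows the paper's reduction: one combined choice and duplication gadget per variable, one patient without proxy donors per clause, and the same count of $(3+\delta)m$ served patients out of $4m$. The only difference is in the soundness step. The paper restricts to a maximum allocation and uses a local-improvement argument to show that every gadget is fully covered by cycles before reading off the assignment. You instead apply part~2 of Lemma~\ref{lem:gadget-analysis} directly to the induced allocation of an arbitrary allocation and bound each gadget by $4$ patients, which is slightly cleaner and avoids the exchange step.
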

\begin{proof}
Given an E3CNF formula $\Phi$ with $n$ variables where every literal occurs exactly~2 times we construct in polynomial time a kidney exchange graph $G$ as follows.

For each variable $x_i$ we create a combined choice and duplication gadget, where the output nodes $d_{1,1}$ and $d_{2,1}$ correspond to the literal $x_i$ and the output nodes $d_{1,2}$ and $d_{2,2}$ correspond to the literal $\neg x_i$. For each clause $C_j$ we create a \emph{clause gadget} that consists of a single patient node $p_j$ and 3~input donor nodes with directed edges to $p_j$ corresponding to the literal of the clause. To combine these to form $G$ we simply identify the output nodes of each combined choice and  duplication gadgets with corresponding input nodes of the patient gadgets, which is possible since each literal occurs exactly~2 times. Since $\Phi$ must have exactly $m=(4/3)n$ clauses the graph $G$ contains exactly $N= 4n+(4/3)n=(16/3)n$ patient nodes.

Consider an allocation $M$ in $G$ with a maximum number of patient nodes. Since each chain in $M$ is of length~1, there can be no chain containing a patient node of a combined choice and duplication gadget. Indeed, that would mean $M$ could contain at most~3 patient nodes of that gadget without providing any output nodes. The allocation $M$ could then be altered locally to contain all~4 patient nodes, contracting maximality. 

Thus $M$ must include one or two cycles through all~4 patients nodes in each combined choice and duplication gadget, and the output nodes not in these cycles correspond exactly to a truth assignment of the variables of $\Phi$ by Lemma~\ref{lem:gadget-analysis}. The patient node in a clause gadget is thus in $M$ if and only if the clause is satisfied by that truth assignment. We can now use this insight to complete the analysis.

Suppose first that $\Phi$ is at least $(1-\eps)$-satisfiable. 
Then there exist an allocation in $G$ with at least at least $4n+(1-\eps)(4/3)n = (1-\eps/4)N$ patient nodes. Suppose next that $\Phi$ is at most $(1015/1016+\eps)$ satisfiable. Then every allocation in $G$ must contain at most $4n+(1015/1016 + \eps)(4/3)n = (4063/4064+\eps/4)N$ patient nodes.
The conclusion then follows from Theorem~\ref{thm:Berman-Karpinski-Scott}.
\end{proof}

\begin{remark}
    The proof of Theorem~\ref{thm:reduction} used the restriction of chains to length~1, but the same reduction in fact also establishes the same hardness result for $\DDKE{k}{2}$. For chains of length~3, however, the reduction no longer work. The reason is that the combined choice and duplication gadget has allocations consisting of a 2-cycle and a 2-chain that provides nodes from both $\{d_{1,1}$ and $d_{2,1}\}$ and $\{d_{1,2}$ and $d_{2,2}\}$.
\end{remark}

To be able to accommodate longer chains we use an alternative combined choice and duplication gadget, given in Figure~\ref{fig:alternative-choice-gadget-and-duplication-gadget}. With that we can show APX-hardness of $\DDKE{k}{\ell}$ for any constant chain length $2\leq \ell < \infty$.
\begin{figure}[ht]
\centering
\begin{tikzpicture}[scale=0.65, transform shape]
    \node[indonor] (d0) at (0,0) {$d_0$};
    \node[patient] (p1) at (1.5,1) {$p_1$};
    \node[patient] (p2) at (1.5,-1) {$p_2$};
    \node[outdonor] (d11) at (3,2) {$d_{1,1}$};
    \node[outdonor] (d21) at (3,0.5) {$d_{2,1}$};
    \node[outdonor] (d12) at (3,-0.5) {$d_{1,2}$};
    \node[outdonor] (d22) at (3,-2) {$d_{2,2}$};

    \draw[kidneyarcs] 
    (d0) edge (p1) edge (p2)
    (p1) edge (d11)
    (p1) edge (d21)
    (p2) edge (d12)
    (p2) edge (d22)
    ;
\end{tikzpicture}
\caption{Alternative combined choice and duplication gadget.}\label{fig:alternative-choice-gadget-and-duplication-gadget}
\end{figure}
\begin{theorem}
    For any $2\leq k \leq \infty$ and any $\eps>0$ it is NP-hard to distinguish between $\DDKE{k}{2}$ kidney exchange graphs for which there exists an allocation containing at least a fraction $1-\eps/4$ of the patient nodes from those where every allocation contains at most a fraction $4063/4064+\eps/4$ of the patient nodes. The graphs produced by the reduction furthermore satisfy that no allocations have chains of length longer than~2. Consequently, it is NP-hard to approximate $\DDKE{k}{\ell}$ within $\frac{4064}{4063} - \eps$, for any $2 \leq k \leq \infty$, any $2 \leq \ell < \infty$, and any $\eps>0$.
\label{thm:DDKE-k-l-hardness}
\end{theorem}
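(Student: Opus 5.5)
The plan is to reuse the reduction from Theorem~\ref{thm:DDKE-k-1-hardness}, with the combined choice and duplication gadget of Figure~\ref{fig:combined-choice-duplication-gadget} replaced by the alternative gadget of Figure~\ref{fig:alternative-choice-gadget-and-duplication-gadget}. The alternative gadget has an input node $d_0$ of indegree~$0$. Since altruistic donors are not allowed, I will realise $d_0$ with the altruistic donor gadget of Figure~\ref{fig:altruistic-donor-gadget}: two fresh patients $q_1,q_2$ form a 2-cycle through $d_1$ and $d_{2,1}$, and the extra proxy donor $d_{2,2}$ of $q_2$ is identified with $d_0$.

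The reduction is then as follows. Start from an E3SAT formula $\Phi$ with $n$ variables in which every literal occurs exactly twice, as in Theorem~\ref{thm:Berman-Karpinski-Scott}. For each variable $x_i$, build the alternative gadget together with its altruistic simulator. This gives $4$ patients per variable ($q_1,q_2,p_1,p_2$). The outputs $d_{1,1},d_{2,1}$ of $p_1$ correspond to the literal $x_i$, and the outputs $d_{1,2},d_{2,2}$ of $p_2$ correspond to $\neg x_i$. Each clause $C_j$ becomes a single patient of outdegree~$0$, with in-edges from the three output donors of its literals. These donors can be assigned bijectively because each literal occurs exactly twice and each gadget offers two copies per literal. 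The total number of patients is $N=4n+m=(16/3)n$, as before.

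Next I would describe the cycle and chain structure of this graph.
\begin{itemize}
\item Clause patients have no proxy donors, and $p_1,p_2$ point only to clause patients. Hence the only directed cycles are the 2-cycles $q_1 \to d_1 \to q_2 \to d_{2,1} \to q_1$. Every allocation therefore has cycle length at most~$2$, so the result holds for every $k\ge 2$.
\item After removing cycles, the only chains start at $d_0$ and have the form $d_0 \to p_r \to d \to$ clause patient. These contain at most $2$ patients. This proves the claim that no allocation has a chain longer than~$2$. It follows that the optimum is the same for every $2\le \ell<\infty$, so the gap transfers to all such $\ell$.
\end{itemize}

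The key step is normalising an optimal allocation $M$. First, if the 2-cycle of some gadget is absent, then $d_0$ cannot donate. In that case $p_1,p_2$ and every patient reachable from them are unserved. Adding the 2-cycle and a chain to $p_1$ strictly increases the count, so every gadget's 2-cycle is present. Second, $d_0$ has outdegree~$1$, so at most one of $p_1,p_2$ is served. If neither is served, adding $p_1$ via $d_0$ loses nothing, since its output donors only create extra options. So exactly one of them is served, which encodes a truth value: the served $p_r$ makes its two literal outputs available to clause patients. A clause patient is then served if and only if some incoming output donor belongs to a served $p_r$, that is, if and only if the induced assignment satisfies the clause. Conversely, any assignment yields an allocation with $4n$ plus the number of satisfied clauses patients.

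The counting is then exactly as in Theorem~\ref{thm:DDKE-k-1-hardness}. A $(1-\eps)$-satisfiable formula gives at least $(1-\eps/4)N$ patients. An at most $(1015/1016+\eps)$-satisfiable formula gives at most $(4063/4064+\eps/4)N$ patients. The main care point is the normalisation argument, in particular checking that chains from $d_0$ cannot be rerouted to serve both $p_1$ and $p_2$ (they cannot, because $d_0$ has outdegree~$1$), and that no cycle or chain can pass through a clause patient.
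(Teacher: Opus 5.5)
Your reduction is exactly the paper's. Your structural claims are correct: the only cycles are the gadget $2$-cycles on $\{q_1,q_2\}$, and every chain has the form $d_0\to p_r\to d\to$ clause patient, so the optimum is the same for all $k\ge 2$ and all $2\le \ell<\infty$.

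\textbf{The gap is in the count of served patients.} As you note yourself in the normalisation, $d_0$ is the only in-neighbour of both $p_1$ and $p_2$ and has outdegree~$1$. So each variable contributes exactly \emph{three} served patients ($q_1$, $q_2$ and the chosen $p_r$), not four. Your sentence ``any assignment yields an allocation with $4n$ plus the number of satisfied clauses patients'' contradicts this. The true optimum is $3n+s^*$, where $s^*$ is the maximum number of simultaneously satisfiable clauses, while $N=4n+m$ with $m=\tfrac43 n$. Hence even for a satisfiable $\Phi$ at most $3n+m=\tfrac{13}{16}N$ patients can be served. Your claimed bound $(1-\varepsilon/4)N$ in the first case is therefore false for this graph family, and the distinguishing statement with constants $1-\varepsilon/4$ and $4063/4064+\varepsilon/4$ is not established. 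The paper's short proof has the same slip. It only checks that the number of patient \emph{nodes} per variable is still~$4$ and says the rest ``follows exactly as before''. But in the old gadget all four patients are served, whereas here only three are.

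With the correct count, the reduction gives two bounds:
\begin{itemize}
\item A $(1-\varepsilon)$-satisfiable $\Phi$ yields an allocation with at least $3n+(1-\varepsilon)m=\bigl(\tfrac{3302}{4064}-\tfrac{\varepsilon}{4}\bigr)N$ patients.
\item An at most $(1015/1016+\varepsilon)$-satisfiable $\Phi$ forces every allocation to contain at most $3n+\bigl(\tfrac{1015}{1016}+\varepsilon\bigr)m=\bigl(\tfrac{3301}{4064}+\tfrac{\varepsilon}{4}\bigr)N$ patients. Here the unchosen $p_r$ is never served, and each served clause patient is satisfied by the induced assignment.
\end{itemize}
The ratio of these bounds tends to $3302/3301>4064/4063$. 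So the final claim does go through, even with a slightly better constant: NP-hardness of approximating \DDKE{k}{\ell} within $\frac{4064}{4063}-\varepsilon$ for $2\le k\le\infty$ and $2\le \ell<\infty$. Your chain-length observation then transfers it from $\ell=2$ to every finite $\ell\ge 2$. To fix your write-up, redo the arithmetic explicitly with $3n$ in place of $4n$ and state the gap with these corrected fractions, rather than importing the count from Theorem~\ref{thm:DDKE-k-1-hardness}.
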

\begin{proof}  
    We modify the reduction of the proof of Theorem~\ref{thm:DDKE-k-1-hardness} by simply replacing the combined choice and consistency gadget. More precisely, for each variable~$x_i$ we take the altruistic donor gadget of Figure~\ref{fig:altruistic-donor-gadget} and identify its output node with the input node of the alternative combined choice and duplication gadget of Figure~\ref{fig:alternative-choice-gadget-and-duplication-gadget}. Note that this introduces exactly~4 patient nodes for every variable which is the same number of patient nodes of the replaced combined and consistency gadget. The remaining part of the proof thus follows exactly as before resulting in a kidney exchange graph~$G$. By construction, the largest possible chain in in allocation in~$G$ starts from the simulated altrustic donor and ends in the patient node of the clause gadget, which is a chain of length~2.
\end{proof}
\begin{remark}
For obtaining stronger inaproximability results we may consider MAXSAT instances with more occurrences of literals and extend the combined choice and duplication gadget accordingly.
Specifically, one may directly extend the combined choice and duplication using a simple duplication gadget (see Figure~\ref{fig:choice-gadget-and-duplication-gadget}) repeatedly to match the number of donor nodes corresponding each literal to its number of occurrences, thereby allowing one to obtain improvements in the precise constant of inapproximability depending on~$\ell$. We leave this to the interested reader.
\end{remark}

Our next results show that, unless P=NP, $\DDKE{k}{\infty}$ is not contained in APX for all $2\leq k \leq \infty$. We obtain the strongest such statement for $k<\infty$ by a simple reduction from the longest path problem in directed graphs.
\begin{theorem}[Björklund, Husfeldt, and~Khanna\cite{bjorklund2004approximation}]
For any $\eps>0$, it is NP-hard to approximate the longest path in a directed graph with~$n$ nodes within $n^{1-\eps}$.
\label{thm:Bjoerklund-Husfeldt-Khanna}
\end{theorem}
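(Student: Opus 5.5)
The plan is to prove hardness by \emph{gap amplification}. I would start from an NP-hard decision problem whose ``yes'' instances contain long paths and whose ``no'' instances force every path to be short within a gadget. Then I would iterate the gadget so that the gap compounds exponentially in the number of nesting levels, while the graph size grows only polynomially for any fixed number of levels. The natural base problem is the directed \emph{two vertex-disjoint paths} problem: given a digraph $H$ and terminals $s_1,t_1,s_2,t_2$, decide whether there are vertex-disjoint paths $s_1\to t_1$ and $s_2\to t_2$. This is NP-complete by Fortune, Hopcroft and Wyllie. Following their reduction from 3SAT, I would use the stronger form in which, for yes instances, the two disjoint paths can be chosen to cover all vertices of $H$. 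This is what makes a yes instance contain a path of length comparable to $|V(H)|$ through each gadget.

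First, I would build the basic amplifier. Take a ``ladder'' (a chain of diamonds) of length $m$, and replace each vertex of it by a copy of $H$, wired so that entering and leaving a copy in the two possible ways corresponds to using the $s_1\to t_1$ route and the $s_2\to t_2$ route. The routes of consecutive copies are glued so that a path traversing the whole structure must pass through every copy twice, once on each of the two routes. In a yes instance both routes can be realised disjointly and Hamiltonian-like, so a path through the level-$1$ graph visits a constant fraction of all vertices. In a no instance, any path can ``use both routes'' in at most one copy. Consequently it can only traverse a bounded number of copies in full and otherwise runs along the skeleton, so its length is $O(|V(H)|+m)$ rather than $\Omega(m|V(H)|)$.

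Second, I would recurse. Form $H^{(j+1)}$ by substituting a copy of $H^{(j)}$ for every vertex (or every edge) of the amplifier skeleton, while keeping the terminal structure intact. A yes instance then yields a path of length $\Omega(N_j)$ in a graph with $N_j$ vertices. A no instance yields at most about $j\cdot\mathrm{poly}(|V(H)|)$, up to polylogarithmic overhead, because a path can go deep into only one copy per level. Choosing the number of levels $j$ as a large constant depending on $\eps$ keeps the size $N_j=|V(H)|^{O(j)}$ polynomial. The ratio between the two cases is then $N_j^{1-O(1/j)}\ge N_j^{1-\eps}$.

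The main obstacle is the soundness analysis, namely showing that in a no instance a path cannot profit from many partial traversals of copies across levels. I would handle this by an inductive charging argument. Define $L_j$ as the longest path in $H^{(j)}$ for a no instance, and show $L_{j+1}\le L_j + c\,|V(H)|$ for a constant $c$. The key claim is that any path entering two different copies of $H^{(j)}$ via both their terminal routes would, after contracting everything outside one copy, yield two disjoint $s_1\to t_1$ and $s_2\to t_2$ paths in $H$, a contradiction. Proving this contraction step carefully, so that it respects the gluing, is where I expect the real work to lie.
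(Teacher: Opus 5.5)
The paper gives no proof of this statement; it cites it from Björklund, Husfeldt and Khanna, so the comparison is with their argument. Your starting point agrees with theirs: directed 2VDP via Fortune--Hopcroft--Wyllie, restricted to instances whose yes-solutions cover all of $V(H)$. The soundness step, however, is the whole content of the amplification, and it fails as you state it.

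The 2VDP hypothesis only forbids a path from traversing \emph{both} routes of the \emph{same} copy disjointly. It says nothing about a path that visits a copy once, along $s_1\to t_1$, along $s_2\to t_2$, or along a crossing route $s_1\to t_2$ or $s_2\to t_1$. In a no instance even both crossing routes may be used disjointly. Nothing prevents such a single visit from having length $\Theta(|V(H)|)$. Hence your bound $O(|V(H)|+m)$ for the level-one amplifier, with the path ``otherwise running along the skeleton'', is unjustified. In a ladder or chain of diamonds the copies are strung in sequence, so a path can typically visit $\Theta(m)$ of them once each, for length $\Theta(m|V(H)|)$. That is the same order as in the yes case, so there is no gap.

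What you need is a skeleton with two properties. Every walk that uses each copy at most once (each exit $t_1,t_2$ at most once, and never an $s_1\to t_1$ segment together with a disjoint $s_2\to t_2$ segment) must meet only $o(m)$ copies. A walk using every copy twice must cover all of them. You assert this but neither construct such a skeleton nor prove the property.

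The same issue breaks the recursion. The inequality $L_{j+1}\le L_j+c|V(H)|$ would require that only one copy of $H^{(j)}$ is visited nontrivially. Nothing stops shallow single-route visits to many copies, each contributing up to $L_j$. Your ``key claim'' about two different copies and contraction targets the wrong obstruction: using both routes inside one copy is already excluded directly, with no contraction.

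For comparison, the cited proof needs no recursion. It places $m$ copies of $H$ at the nodes of a complete binary tree. They are wired so that the depth-first tour passes through each copy once on the way down via route~1 and once on the way up via route~2, e.g.\ $t_1(u)\to s_1(\ell(u))$, $t_2(\ell(u))\to s_1(r(u))$, $t_2(r(u))\to s_2(u)$, and $t_1(u)\to s_2(u)$ at leaves.
\begin{itemize}
\item In the yes case this tour is a Hamiltonian path.
\item In the no case, a path that enters a copy at $s_1$ and descends through $t_1$ can never pass back up through that copy. Apart from $O(1)$ exceptional copies, the path can only climb for a while and then descend. It therefore meets $O(\log m)$ copies and has length $O(|V(H)|\log m)$.
\end{itemize}
Taking $m=|V(H)|^{c}$ with $c$ of order $1/\varepsilon$ then yields the $n^{1-\varepsilon}$ gap in a single step.
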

The basic idea of the reduction is to model paths by chains, and initiate a chain of kidney exchanges by a simulated altruistic donor.
\begin{theorem}
For every $\eps>0$ and every~$2 \leq k < \infty$ it is NP-hard to approximate $\DDKE{k}{\infty}$ for kidney exchange graphs with~$n$ patients within $n^{1-\eps}$.
\label{thm:DDKE-k-infty-hardness}
\end{theorem}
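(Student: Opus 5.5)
We reduce from the longest path problem in directed graphs (Theorem~\ref{thm:Bjoerklund-Husfeldt-Khanna}), using chains to model paths and the altruistic donor gadget of Figure~\ref{fig:altruistic-donor-gadget} to start them.

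\textbf{Construction.} Let $H=(U,A)$ be a directed graph with $n_H$ nodes. We may assume $H$ has a source $s$ from which every path of interest starts. If not, we add a new node $s$ with arcs to every node; this changes the longest path length by at most one, which is harmless for an $n^{1-\eps}$ gap.

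For each node $u \in U$ we create a patient $p_u$ with a single proxy donor $d_u$. For each arc $(u,v)\in A$ we add the compatibility edge $(d_u,p_v)$.

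We replace the donor of the source by the simulated altruistic donor. That is, we add the 2-cycle gadget with patients $q_1,q_2$, and the output node $d_{2,2}$ of $q_2$ is made compatible with $p_v$ for every $v$ with $(s,v)\in A$. Alternatively, $p_s$ itself becomes $q_2$'s second-donor target.

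\textbf{The cycle problem, and how to remove it.} Cycles of $H$ of length at most $k$ would become exchange cycles in $G$, and these could serve many patients without corresponding to a path. The cleanest fix is to make $H$ acyclic-compatible. We apply the reduction to a layered copy of $H$: take $n_H$ layers $U\times\{1,\dots,n_H\}$, with arcs from $(u,i)$ to $(v,i+1)$. However, paths in the layered graph correspond to walks, not paths, so this is not directly right.

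Instead, we note that the hard instances of Björklund--Husfeldt--Khanna can be taken acyclic, or at least with the gap preserved on DAGs. Actually, longest path in DAGs is polynomial, so that does not work.

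The better option is to keep $H$ general and make cycles ineffective. We do this by giving each patient no donor usable in a cycle of length at most $k$. Concretely, we subdivide: each node $u$ of $H$ becomes a path of $k$ patients $p_u^1\to d_u^1\to p_u^2\to\cdots\to p_u^k\to d_u^k$, and arcs go from $d_u^k$ to $p_v^1$. Then every directed cycle in $G$ outside the altruistic gadget has length at least $2k>k$, so the allowed cycles are only those of the gadget. Every allocation is therefore the gadget 2-cycle plus one chain (from $d_{2,2}$) plus nothing else, since there are no other donors without patients.

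A chain visiting the gadget nodes of $u_1,\dots,u_t$ must correspond to a path in $H$, because patients are in the allocation at most once. A chain may also end partway through a subdivided path, which changes the count by at most $k$.

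\textbf{Gap computation.} The number of patients is $N=kn_H+2$, and an allocation serves $2+k\cdot L\pm k$ patients, where $L$ is the number of nodes on a path from $s$. Because $k$ is constant, an $N^{1-\eps}$-approximation for $\DDKE{k}{\infty}$ yields an $O(n_H^{1-\eps})$-approximation for the longest path. Adjusting $\eps$ then gives the claim.

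\textbf{Main obstacle.} The step to check carefully is the structural claim that the only cycles of length at most $k$ are the gadget's. We must also confirm that the gadget's extra donor is the unique chain start, so that chains are in exact correspondence with paths in $H$ starting at $s$.
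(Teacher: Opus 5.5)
Your reduction is correct and follows the paper's skeleton. You reduce from the Bj\"orklund--Husfeldt--Khanna longest-path hardness, let the simulated altruistic donor of Figure~\ref{fig:altruistic-donor-gadget} start the only chain, and subdivide so that every cycle outside the gadget has at least $2k>k$ patients.

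The difference is what you subdivide. The paper keeps one patient--donor pair $p_v,d_v$ per node and replaces each arc $(u,v)$ by a path through $k-1$ new patient--donor pairs. It then joins the gadget's output donor to every $p_v$. You instead blow each node up into $k$ pairs and keep arcs as single compatibility edges.

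Your version buys a size bound that does not depend on the arc set. When $d_{2,2}$ is joined to every $p_v^1$, the instance has exactly $kn_H+2$ patients, so the $N^{1-\eps}$ gap becomes an $O(n_H^{1-\eps})$ gap directly. The paper records $N=kn+2$, but arc subdivision actually produces $n+(k-1)|E|+2$ patients. That is $O(n)$ only when $|E|=O(n)$, as it is for the bounded-outdegree hard instances of Bj\"orklund et al. The paper's construction, in return, lets one read off the path from every $k$-th patient on the chain, which is only a convenience.

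When writing this up, make the following tidy-ups:
\begin{enumerate}
\item Drop the layered-graph and DAG detours.
\item Attach $d_{2,2}$ to every $p_v^1$; this makes the auxiliary source $s$ unnecessary.
\item Assume $H$ is loopless, since a loop would become a $k$-patient cycle, which the cycle bound allows.
\item Justify that $d_{2,2}$ is the unique chain start. Every node of an allocation has indegree exactly $1$, so each component is a single cycle with branching only at a patient both of whose donors are used. Since $q_2$ is the only patient with two proxy donors, it is the only possible branching point.
\end{enumerate}
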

\begin{proof}
    Given a directed graph $G=(V,E)$ we construct in polynomial time  a kidney exchange graph $G'=(V',E')$ as follows. For each node $v \in V$ we create a patient node $p_v$ with a proxy donor $d_v$. For every directed edge $(u,v) \in E$ we create a path from $d_u$ to $p_v$ alternating between donor nodes and patient nodes through $k-1$ new patient and $k-1$ new donor nodes. In this way $G'$ has no cycles of length less than $2k$, and every path in $G$ from $u$ to $v$ of length $L$ gives rise to length~$kL$ chain from $d_u$ to $p_v$.
    
    In order to initiate the kidney exchanges we use the altruistic donor gadget of Figure~\ref{fig:altruistic-donor-gadget} that provides an output donor. This output donor is then connected by directed edges to all patient nodes~$p_v$ for $v \in V$. The resulting kidney exchange graph $G'$ thus has $N=kn+2$ patient nodes.

    It follows that $G$ has a path of length~$L$ if and only if $G'$ has an allocation with at least $kL+3$ patients, consisting of a 2-cycle and a chain of length $kL+1$. 
     
    Letting $\OPT$ and $\OPT'$ denote length of the longest path in $G$ and the maximum number of patients in an allocation in $G'$, respectively, we then have $k \cdot \OPT+3 \leq \OPT'$. Furthermore, an allocation in $G'$ with more than~2 patients will consist of the 2-cycle and a chain where every $k$-th patient node is of the form $p_u$ for $u \in V$. An allocation in $G'$ with $M$ patients thus gives rise to a path in $G$ of length at least $\left\lfloor (M-(k+1))/k\right\rfloor$. We conclude that an allocation with at least $\OPT' / N^{1-\eps}$ patients gives a path in $G$ of length $\Omega(\OPT/n^{1-\eps})$ and the result follows from Theorem~\ref{thm:Bjoerklund-Husfeldt-Khanna}. 
\end{proof}
For the case when we place no restrictions on cycle or chain length we prove hardness by a more involved reduction from MAXE3SAT involving a hardness amplification construction. We could in principle use the construction of Theorem~\ref{thm:DDKE-k-l-hardness} as the initial source of hardness, but proving a stronger statement results in an overall simpler proof. We use the following celebrated result of Håstad~\cite{H01}.
\begin{theorem}[{Håstad~\cite[Theorem 6.5]{H01}}]
\label{thm:Haastad-E3SAT}
    For any $\eps>0$ it is NP-hard to distinguish satisfiable E3SAT formulas from those that are at most $(7/8+\eps)$ satisfiable.
\end{theorem}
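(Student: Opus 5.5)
The paper invokes this result as a black box from~\cite{H01}. If I had to prove it, I would follow Håstad's route: a PCP verifier that reads exactly three bits, accepts every correct proof with probability~$1$, and applies an acceptance predicate that is a $3$-clause on (possibly negated) proof bits. Enumerating the verifier's random strings then produces an E3SAT formula with one clause per random string. A satisfying assignment corresponds to a correct proof. If the formula is more than $(7/8+\eps)$-satisfiable, some proof is accepted with probability above $7/8+\eps$, and from it I would decode a good strategy for the underlying hard label-cover instance.

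\textbf{Step 1: the outer verifier.} I would start from the PCP theorem in the form of a gap for $2$-prover $1$-round games (label cover): satisfiable versus at most $c<1$ satisfiable. Raz's parallel repetition theorem then drives the soundness down to any $\delta>0$, while the alphabet sizes stay constant for each fixed $\delta$.

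\textbf{Step 2: the inner verifier.} Each prover answer is encoded by its long code, with folding over negation so that encodings are odd and linear bias is eliminated. The verifier samples an edge $(U,W)$ of the label-cover instance with projection $\pi$. It picks random $f$ on $U$'s labels and $g$ on $W$'s labels, and builds $h$ from $f$ and $g$ by a perturbed conditional rule that guarantees perfect completeness. It reads $A_U(f)$, $A_W(g)$, $A_W(h)$ and accepts according to the OR-type predicate. This is Håstad's 3SAT test rather than his linear test: the linear test only gives imperfect completeness, whereas the OR test has perfect completeness, which is what the statement needs since the YES case is ``satisfiable''.

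\textbf{Step 3: soundness.} I would write the acceptance probability as $7/8$ plus a sum of terms. Each term has the form $\mathbb{E}[A_U(f)A_W(g)]$, $\mathbb{E}[A_W(g)A_W(h)]$, or $\mathbb{E}[A_U(f)A_W(g)A_W(h)]$. I would expand each in the Fourier basis of the long-code tables. Folding kills the constant and degree-one-type contributions. The remaining correlations are bounded by Fourier mass on pairs $(\alpha,\beta)$ with $\pi_2(\beta)=\alpha$, with a damping factor that is exponentially small in $|\beta|$ because of the noise in $h$. If the acceptance probability exceeds $7/8+\eps$, I would decode labels by sampling $\beta$ with probability $\hat A_W(\beta)^2$ and $\alpha$ with probability $\hat A_U(\alpha)^2$. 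This satisfies an $\eps^{O(1)}$ fraction of label-cover constraints in expectation, which contradicts $\delta$-soundness once $\delta$ is chosen small enough.

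\textbf{Main obstacle.} The hardest part is the soundness analysis of the OR test with perfect completeness. The perturbation must be chosen adaptively (Håstad uses a sequence of noise rates) so that the cubic term's contribution from large $|\beta|$ is controlled, while correctness on true long codes is preserved. I would follow the noise-schedule averaging argument of~\cite{H01} at that point.
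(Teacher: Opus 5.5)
Your architecture matches the route the paper relies on: the paper gives no proof and imports the result from H{\aa}stad \cite{H01}. His proof combines a parallel-repeated two-prover game, folded long codes, and a three-query test on $A=A_U$, $B=A_W$ that rejects only when $A(f)=B(g_1)=B(g_2)=1$ (your $g,h$). Here $g_2$ is built from $f,g_1$ by a noisy conditional rule, with the noise rate drawn from a schedule.

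\textbf{The gap is in Step~3.} It imports the soundness analysis of the linear (E3-LIN-2) test, and that analysis is false for the OR test. In H{\aa}stad's test the noise acts only on the blocks $\pi^{-1}(x)$ with $f(x)=-1$; on blocks with $f(x)=1$ it sets $g_2=-g_1$. Perfect completeness forbids noise at least where $f(x)=g_1(y)=1$. Write $s_x=|\beta\cap\pi^{-1}(x)|$ and $\eta$ for the noise rate. Then
\[
\mathbb{E}[B(g_1)B(g_2)]=\sum_{\beta}\hat B_\beta^2\,\mathbb{E}_U\Bigl[\prod_{x\in\pi(\beta)}\tfrac{(-1)^{s_x}+(1-2\eta)^{s_x}}{2}\Bigr].
\]
So the damping is at best $(1-\eta)^{|\pi(\beta)|}$, not exponential in $|\beta|$. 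For a fixed $U$, a large odd $\beta$ inside a single block contributes about $-\hat B_\beta^2/2$. Moreover, this quadratic term contains no $A$ and survives both folding and independence. It therefore cannot be charged to decodable pairs $(\alpha,\beta)$ at all. Even in the cubic term, $\alpha=\pi_2(\beta)$ is no longer forced, only favoured.

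\textbf{The missing idea} is a combinatorial lemma tied to the concrete outer verifier. When $U$ is formed by picking a uniformly random variable from each of the $u$ clauses defining $W$, large sets have large projections on average: $\mathbb{E}_U[|\pi_U(\beta)|^{-1}]\le|\beta|^{-c}$ for an absolute constant $c>0$. A generic label-cover instance need not have this property, so Step~1 must use the specific clause--variable game. With the lemma in hand, the ranges of $|\beta|$ split as follows.
\begin{itemize}
\item For $|\beta|\le\eps/\eta$, folding suffices: it makes $|\beta|$ odd, so some $s_x$ is odd, and that factor has absolute value at most $\eta|\beta|$.
\item For $|\beta|\ge(\eta\eps)^{-O(1/c)}$, the projection lemma suffices.
\item For any single $\eta$, the range in between is uncontrolled.
\end{itemize}
Controlling that middle range, not large $|\beta|$, is the job of the schedule. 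Draw $\eta$ uniformly from $\eta_1>\dots>\eta_t$, with $t\approx 1/\eps$ and $\eta_{i+1}$ a large power of $\eta_i\eps$, so the middle ranges are disjoint. Since $\sum_\beta\hat B_\beta^2\le 1$, the average bad mass is then at most $1/t$. With the damping you claimed, the schedule would have nothing to do, so your plan does not yet identify what makes soundness hold.

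A minor side effect: the decoded strategy succeeds with probability governed by $\eta_t$, which is far below $\eps^{O(1)}$. This is harmless, since any positive function of $\eps$ suffices once the repetition parameter is chosen accordingly.
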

The main advantage of using this result, compared to using~Theorem~\ref{thm:Berman-Karpinski-Scott}, is that one of the two cases involves \emph{satisfiable} formulas. 

Our reduction again uses gadgets for choice and duplication, where we for simplicity split the alternative combined choice and duplication gadget into a separate choice and duplication gadgets as shown in Figure~\ref{fig:choice-gadget-and-duplication-gadget}. 

\begin{figure}[ht]
\centering
\begin{tikzpicture}[scale=0.65, transform shape]
    \node[indonor] (d0) at (0,0) {$d_0$};
    \node[patient] (p1) at (1.5,1) {$p_1$};
    \node[patient] (p2) at (1.5,-1) {$p_2$};
    \node[outdonor] (d1) at (3,1) {$d_1$};
    \node[outdonor] (d2) at (3,-1) {$d_2$};

    \draw[kidneyarcs] 
    (d0) edge (p1) edge (p2)
    (p1) edge (d1)
    (p2) edge (d2)
    ;
\end{tikzpicture}
\hspace{15mm}
\begin{tikzpicture}[scale=0.65, transform shape]
    \node[indonor] (d0) at (0,0) {$d_0$};
    \node[patient] (p) at (1.5,0) {$p_1$};
    \node[outdonor] (d1) at (3,1) {$d_1$};
    \node[outdonor] (d2) at (3,-1) {$d_2$};

    \draw[kidneyarcs] 
    (d0) edge (p)
    (p) edge (d1)
    (p) edge (d2)
    ;
\end{tikzpicture}
\caption{Choice gadget (left) and duplication gadget (right)}\label{fig:choice-gadget-and-duplication-gadget}
\end{figure}

Several such duplication gadgets may be joined to form a general duplication gadget with any fixed number of output nodes.
\begin{lemma}[Duplication gadget]
For any positive integer $k$, there is a gadget with a single input node, $k$ patient nodes\footnote{We use $k$ (rather than $k-1$) patient nodes in order to simplify the presentation and proof of Lemma~\ref{lem:e3sat-gadget}.}, and $k$ output nodes such there is an allocation using the input node, including all patient nodes, and provides all output nodes.
\label{lem:duplication-gadget}
\end{lemma}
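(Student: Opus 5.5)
We build the gadget by induction on $k$, chaining copies of the simple duplication gadget of Figure~\ref{fig:choice-gadget-and-duplication-gadget} (right). That gadget has one input node $d_0$, one patient node $p_1$, and two output nodes $d_1,d_2$.

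For $k=1$ we take a single patient node $p$ with one incoming edge from the input node $d_0$ and one proxy donor $d_1$, which serves as the output node. The allocation consisting of the edge $(d_0,p)$ uses the input node and includes the patient. It also provides $d_1$, since $(p,d_1)\in E$ and $d_1\notin M$.

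For the inductive step, suppose the gadget $G_k$ with input $d_0$, $k$ patients and outputs $o_1,\dots,o_k$ is constructed. Take a fresh simple duplication gadget with input node $d_0'$, patient $p'$ and outputs $d_1',d_2'$. Identify $d_0'$ with the output node $o_k$ of $G_k$. The resulting graph $G_{k+1}$ has $k+1$ patient nodes and the $k+1$ output nodes $o_1,\dots,o_{k-1},d_1',d_2'$. It is still a valid kidney exchange graph: $o_k$ keeps indegree~1, and $p'$ has outdegree~2.

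For the allocation, we use the combination notion defined above. Let $M_k$ be the allocation of $G_k$ given by the induction hypothesis; it uses $d_0$ and provides all of $o_1,\dots,o_k$. In the added gadget, take the allocation $\{(d_0',p')\}$, which uses $d_0'=o_k$ and provides $d_1',d_2'$. Their combined allocation $M_{k+1}$ adds the edges $(p_{\mathrm{owner}},o_k)$ and $(o_k,p')$, where $p_{\mathrm{owner}}$ is the patient of $G_k$ whose proxy donor is $o_k$. We then check the allocation conditions directly:
\begin{itemize}
\item $p'$ has indegree~1.
\item $o_k$ now has outdegree~1 and its full indegree~1.
\item All other nodes are unchanged.
\item Every patient of $G_{k+1}$ is included.
\item $o_1,\dots,o_{k-1},d_1',d_2'$ are all provided, because their owning patients are in the allocation while the donors themselves are not.
\end{itemize}

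The only real obstacle is bookkeeping: the induced/combined allocation definition must be applied consistently with respect to which nodes count as inputs versus outputs. The simplest way to handle this is to describe the final gadget explicitly. It is an out-tree rooted at $d_0$ in which each internal donor node has out-edges to one patient, and each patient has one or two proxy donors. The allocation is then the subgraph consisting of $d_0$, all patients, and all non-leaf donors, together with their edges; its properties are verified at once from the definition. This tree is a chain from $d_0$ through the patients, so its chain length grows with $k$. That is acceptable here because the lemma makes no claim about chain length, and it is intended for use in the unrestricted-length setting.
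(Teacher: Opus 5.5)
Your proof is correct and follows the paper's approach. The paper only asserts that copies of the simple duplication gadget can be joined. Your induction makes this explicit: a base gadget with one patient and one proxy donor accounts for the extra patient mentioned in the footnote, and each step attaches a simple duplication gadget at an output, checking the allocation conditions as you go.
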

We now present our gadget construction based on Theorem~\ref{thm:Haastad-E3SAT}.
\begin{lemma}
\label{lem:e3sat-gadget}
    Given a E3CNF formula $\Phi$ with $n$ variables and $m \geq n$ clauses, we may in polynomial time compute a kidney exchange graph with a single input node, $90m$ output nodes, and $3n+94m$ patient nodes. If $\Phi$ is satisfiable then there is an allocation that uses the input node, includes at least $92m$ patient nodes, and provides all $90m$ output nodes. If at most $(8/9)m$ clauses may be satisfied in $\Phi$, then any maximal allocation that uses the input node includes at most $86m$ patient nodes and provides at most $80m$ output nodes.
\end{lemma}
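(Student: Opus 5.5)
The plan is a layered construction built only from the choice and duplication gadgets of Figure~\ref{fig:choice-gadget-and-duplication-gadget} and the general duplication gadget of Lemma~\ref{lem:duplication-gadget}, with the counts chosen to hit exactly $3n+94m$ patients and $90m$ outputs. The layers are as follows.

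\begin{enumerate}
\item The single input node feeds a duplication gadget with $n$ outputs, one per variable $x_i$. This layer has $n$ patients.
\item Each of these $n$ outputs is the input of a choice gadget for $x_i$, whose two outputs correspond to the literals $x_i$ and $\neg x_i$. This layer has $2n$ patients.
\item Each literal output feeds a duplication gadget whose number of outputs equals the number of occurrences of that literal in $\Phi$. Summed over all literals this gives $3m$ patients and $3m$ outputs, one per literal occurrence.
\item Each clause $C_j$ is a single patient node $q_j$ with incoming edges from its three literal-occurrence outputs. This layer has $m$ patients.
\item The proxy donor of $q_j$ feeds a duplication gadget with $90$ outputs. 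This layer has $90$ patients per clause, and these $90m$ outputs are the outputs of the whole gadget.
\end{enumerate}

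The total patient count is $n+2n+3m+m+90m=3n+94m$, and the output count is $90m$, as required.

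The first structural step is to observe that the constructed graph is acyclic. Every edge goes forward through the layers, and the only edges entering the gadget come into the input node. Consequently any allocation using the input node consists of chains, and a patient node is in the allocation only if the donor feeding it is reached from the input. For a choice gadget, the input donor has outdegree $1$ in an allocation, so at most one of $p_1,p_2$ is served. Only that literal's output can then be used, which fixes a truth assignment (or a partial one, which only lowers counts). A literal-occurrence donor is reachable only if its literal is chosen. Hence $q_j$ can be served only if $C_j$ is satisfied by the chosen assignment. Finally, the $90$ outputs of the clause-$j$ duplication gadget can be provided only if $q_j$ is served, since otherwise the patient nodes owning those donors are not in the allocation.

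Completeness. Given a satisfying assignment, I will take the full allocation of Lemma~\ref{lem:duplication-gadget} in the first layer and in the duplication gadgets of the true literals. I will choose the true literal in each choice gadget, serve every $q_j$ through one true literal occurrence, and run the full $90$-duplication behind each $q_j$. This serves $n+n$ patients in the first two layers. It serves at least $m$ patients in the literal layer, because every clause contains a true occurrence. It also serves $m+90m$ patients in the last two layers. The total is at least $2n+92m\ge 92m$, and all $90m$ outputs are provided.

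Soundness. If at most $(8/9)m$ clauses can be satisfied, the upper bounds are counted layer by layer. The first layer contributes at most $n$ patients, the choice layer at most $n$, and the literal layer trivially at most $3m$. At most $(8/9)m$ clause nodes are served, each with at most $91$ patients including its duplication gadget. This gives at most $2n+3m+(8/9)\cdot 91m\le 5m+80.9m<86m$ patients, using $n\le m$. At most $90\cdot(8/9)m=80m$ outputs are provided.

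The main obstacle I expect is making rigorous that partial or irregular allocations inside the duplication gadgets cannot circumvent the assignment structure. Examples are using only some outputs, or using an output of the clause gadget without serving $q_j$. I would handle this by making explicit the tree structure of the duplication gadget from Lemma~\ref{lem:duplication-gadget}, a chain of single-patient duplication gadgets. Then every donor node's reachability from the input node is exactly the condition that all patients on the unique path to it are served.
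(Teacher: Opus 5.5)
Your proposal is correct and matches the paper's proof: the same five-group construction, the same counts ($3n+94m$ patients, $90m$ outputs), and the same layer-by-layer bounds, namely at least $2n+92m$ patients in the satisfiable case and at most $2n+3m+91\cdot(8/9)m<86m$ patients and $80m$ outputs otherwise. Your reachability argument through the tree structure of the duplication gadgets spells out what the paper only asserts (that the choice gadgets fix a truth assignment and a clause node is served only when its clause is satisfied), and it yields the soundness bounds for every allocation, not just maximal ones.
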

\begin{proof}
    The gadget is obtained by combining several simple gadgets in~5 groups as follows.
    \begin{enumerate}
    \item The input node of the gadget is given as the input node of a duplication gadget, as provided by Lemma~\ref{lem:duplication-gadget}, with~$n$ outputs with the $i$-th output node corresponding to the variable~$x_i$.
    \item The output corresponding to~$x_i$ is identified with the input node of a choice gadget, as given in Figure~\ref{fig:choice-gadget-and-duplication-gadget} and the two output nodes will correspond to the literals $x_i$ and $\neg x_i$.
    \item The output nodes corresponding to a literal $\ell$ is identified with the input node of a new duplication gadget with as many output nodes as the number of occurrences of~$\ell$ in~$\Phi$.
    \item To every clause we form a clause gadget consisting of 3~input donor nodes with directed edges to a single patient node with a proxy donor that serves as output node. We identify the input nodes of a clause gadget with unique output nodes of the previous duplication gadgets corresponding to the literals of the clause.
    \item Finally we identify the output nodes corresponding to each clause with an input node of new duplication gadget each with $90$ outputs.
    \end{enumerate}
    Clearly the total number of output nodes is $90m$. The total number of patient nodes are $N=n+2n+3m+m+90m=3n+94m$.

    Consider a maximal allocation using the single input node. The choice gadgets then define a truth assignment to the variables, and a patient node corresponding to a clause is contained in the allocation precisely when the clause is satisfied by the truth assignment. Let $k$ denote the number of clauses satisfied. Then precisely $90k$ output nodes are provided by the allocation. All $n$ patients in group~1 are in the allocation and precisely~$n$ of the~$2n$ patient nodes group~2 are in the allocation. In group~3 and group~4 at least $2k$ and at most $3m+k$ patient nodes are in the allocation. In group~5 precisely~$90k$ patient nodes are in the allocation. Thus, in total, at least $2n+92k \geq 92k$ and at most $2n+3m+91k \leq 5m+91k$ patient nodes are in the allocation.

    Thus, if $\Phi$ is satisfiable, there is a maximal allocation such that all $90m$ output nodes are provided and having at least $92m$ patient nodes contained in the allocation. If on the other hand, at most $(8/9)m$ clauses may be satisfied in $\Phi$, then any maximal allocation will provide at most $90 \cdot 8/9m = 80m$ output nodes and have at most $5m+91 (8/9) m = (773/9) m < 86m$ patient nodes contained in the allocation.
\end{proof}
With this in place we now give the hardness amplification construction.
\begin{theorem}
    It is NP-hard to approximate $\DDKE{\infty}{\infty}$ within any constant.
    \label{thm:DDKE-infty-infty-hardness}
\end{theorem}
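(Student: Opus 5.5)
We prove the theorem by \emph{hardness amplification}: we chain many copies of the gadget of Lemma~\ref{lem:e3sat-gadget} in series, so that the constant-factor gap of Theorem~\ref{thm:Haastad-E3SAT} compounds into a gap that is exponential in the number of levels. The first step is to rescale the gap. Replace the formula $\Phi$ by a formula with the same gap but with $m \geq n$, for instance by repeating clauses. Theorem~\ref{thm:Haastad-E3SAT} with $\eps$ small enough (say $\eps \le 1/72$, since $7/8+1/72 = 8/9$) then says it is NP-hard to distinguish satisfiable instances from instances in which at most $(8/9)m$ clauses can be satisfied. This is exactly the dichotomy used in Lemma~\ref{lem:e3sat-gadget}.

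The construction is a tree of gadgets with a fixed constant depth $t$. At the root we place the altruistic donor gadget of Figure~\ref{fig:altruistic-donor-gadget}, whose output node feeds the input of one copy $H_1$ of the gadget of Lemma~\ref{lem:e3sat-gadget}. Each of the $90m$ output nodes of a level-$j$ copy becomes the input node of a fresh copy at level $j+1$, for $j<t$. The outputs of the last level are left unused; they need not be provided, or they can be connected to nothing. The $i$-th level then contains $(90m)^{i-1}$ copies, so the graph has $N=2+\sum_{i=1}^{t}(90m)^{i-1}(3n+94m)$ patient nodes. Because $t$ is constant, $N$ is polynomial in the size of $\Phi$. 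All copies are built from the same formula.

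Completeness is straightforward. If $\Phi$ is satisfiable, we combine the allocations from Lemma~\ref{lem:e3sat-gadget} in every copy. Each of these uses its input, provides all outputs and serves at least $92m$ patients. The number of patients served is then at least $92m\,(90m)^{t-1}$, which already dominates the count at the last level.

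Soundness is the delicate step. Suppose at most $(8/9)m$ clauses of $\Phi$ are satisfiable. A copy whose input node is not used by the allocation contains no donor with indegree $0$ that can start a chain. Its patients can therefore only be served by cycles inside the copy, and a key lemma to prove is that the gadget of Lemma~\ref{lem:e3sat-gadget} is acyclic. This holds because every sub-gadget (duplication, choice, clause) feeds edges only forward, so the copy serves $0$ patients. Next I must argue that any allocation can be assumed maximal within each used copy without loss of generality. Alternatively, I would check that the bounds in the proof of Lemma~\ref{lem:e3sat-gadget} of at most $80m$ provided outputs and at most $86m$ patients hold for every allocation using the input. I expect the patient bound to hold, but the output bound requires observing that an output is provided only if its clause patient is served. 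By induction, at most $(80m)^{i-1}$ copies at level $i$ have their input used. The total number of patients served is then at most $2+\sum_{i\le t}(80m)^{i-1}\cdot 86m$, which is $O((80m)^{t-1}m)$.

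The ratio between the completeness and soundness values is therefore about $(92/86)(9/8)^{t-1}$, which grows without bound as $t$ increases. For any constant $c$, we choose $t$ with $(9/8)^{t-1} > 2c$, which rules out a $c$-approximation unless P$=$NP. The main obstacle is making soundness robust to non-maximal or unusual allocations. In particular, I need to ensure that no chain entering a copy can provide an output without the corresponding clause patient being served. Chains can branch only at patients with two donors, and every route from an input to an output passes through a clause patient, which together should settle this. I also need to ensure that cycles cannot arise across copies. This follows because the whole construction is a layered DAG except for the single 2-cycle in the root gadget.
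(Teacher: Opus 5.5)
Your proposal is correct and follows the paper's proof: an altruistic-donor gadget feeds a constant-depth $90m$-ary tree of copies of the gadget of Lemma~\ref{lem:e3sat-gadget}, completeness comes from providing all outputs in every copy, and soundness comes from the inductive bound of at most $(80m)^{i-1}$ used copies at level~$i$, each serving at most $86m$ patients. Your explicit checks fill in steps the paper only asserts: each copy is acyclic, so unused copies serve nobody, and the $80m$ and $86m$ bounds hold for arbitrary rather than maximal allocations. Your gap $(92/86)(9/8)^{t-1}$ is slightly sharper than the paper's $(90/86)^k$, though either suffices.
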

\begin{proof}
    Let $\rho>0$ be given and let $k>1$ be an integer such that $(86/90)^k \leq \rho$.  From Theorem~\ref{thm:Haastad-E3SAT} it follows that it is NP-hard to distinguish between satisfiable E3SAT formulas and $8/9$-satisfiable E3SAT formulas. Thus, let $\Phi$ be a E3CNF formula with~$n$ variables and~$m$ clauses such that either $\Phi$ is satisfiable or such that at most $(8/9)m$ clauses of $\Phi$ may be satisfied simultaneously. We may assume without loss of generality that $1 \leq n \leq m$. Let $G$ be the gadget constructed in Lemma~\ref{lem:e3sat-gadget} from $\Phi$ with $N=3n+94m$ patient nodes. We then build the final kidney exchange graph as follows. We start with a gadget simulating an altruistic donor, and join the output node of this to the input of $G$. We then proceed in $k-1$ rounds, joining all output nodes to new copies of the gadget $G$. The copies of $G$ are thus arranged in a $90m$-ary tree of depth $k-1$ and with $(90m)^i$ copies of $G$ at the $i$-level, for $i=0,\dots,k-1$. The resulting kidney excahnge graph has $2+\sum_{i=1}^k N^i$ patient nodes, which is of polynomial size, and can be constructed in polynomial time.
    
    Suppose first that $\Phi$ is satisfiable. Matching both patient nodes of the altruistic donor gadget provides the output node. In $G$ there is an allocation using the the input node that provides all output nodes and includes at least $92m$ patient nodes. We may thus use this allocation in all copies of $G$ and thereby obtain an allocation including at least $2+\sum_{i=0}^{k-1} (90m)^i 92m \geq (90m)^k$ patient nodes.

    Suppose next that at most $8/9m$ clauses of $\Phi$ may be satisfied simultaneously. This means that any allocation using the input node of $D$ can provide at most $80m$ output nodes and include at most $86m$ patient nodes. Consider now any maximal matching in the complete graph. In the $(90m)^i$ copies of $G$ at level~$i$, the allocation can use at most $(80m)^i$ of the input nodes. In the copies of $G$ where the input node is used, at most $86m$ patient nodes are included in the allocation, and in the copies of $G$ where no input node is not used, no patient nodes are included in the allocation. This means that in total at most $2+\sum_{i=0}^{k-1} (80)^i 86m \leq (86m)^k = (86/90)^k (90m)^k \leq \rho (90m)^k$ patient nodes are included in the allocation.
    
    It is thus NP-hard to distinguish between the case where $G$ has an allocation including at least $(90m)^k$ patient nodes and the case where any allocation includes at most $\rho (90m)^k$ patient nodes, and the result follows.
\end{proof}

\section{Algorithmic results}
We devote this section to studying polynomial-time approximation algorithms for the $(2,1)$-DDKE problem. The first simple observation is that the optimal $(2,0)$-DDKE solution yields a $1/2$-approximation to $(2,1)$-DDKE.

We remark that the optimal $(2,0)$-DDKE solution can be easily computed in polynomial time by a maximum matching computation on the undirected graph that has the patients as nodes and an undirected edge between two patients (corresponding to a $2$-cycle in our terminology) if some of the proxy donors of the former is compatible to the latter and vice versa.\footnote{Similarly, by applying the algorithm of Abraham et al.\cite{abraham2007clearing} on this graph, we obtain an optimal solution for $(\infty,0)$-DDKE. Then, we can easily see that this yields a $1/2$-approximation to $(\infty,1)$-DDKE.} 

\begin{theorem}
    The optimal $(2,0)$-DDKE solution is a $1/2$-approximation to the optimal $(2,1)$-DDKE solution.
\end{theorem}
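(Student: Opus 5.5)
Let $M^*$ be an optimal $(2,1)$-DDKE allocation. We will extract from $M^*$ a feasible $(2,0)$-DDKE allocation serving at least half as many patients as $M^*$. Since the optimal $(2,0)$-DDKE solution is at least as good as this extracted allocation, this gives the $1/2$-approximation.

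The first step is to describe the structure of $M^*$. There are no altruistic donors, so every donor node in $M^*$ has indegree~$1$ and hence has its patient in $M^*$. Take the components of $M^*$ after deleting its $2$-cycles. Each such component is a chain of length~$1$: a single patient $q$ that receives from a donor $d$. That donor $d$ is the proxy of some patient $p$ lying in a $2$-cycle of $M^*$. To see this, follow in-edges backwards from $d$. We cannot reach a donor of indegree~$0$, since none exist, so the backward walk must enter a cycle. Chain length at most~$1$ then forces $d$'s patient $p$ to be on a $2$-cycle, and forces $q$ to be the end of the chain. So we have a precise picture: $M^*$ consists of a set $\mathcal{C}$ of vertex-disjoint $2$-cycles, together with chain patients, each attached to a cycle patient through that patient's second proxy donor. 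Each patient has at most two proxies, and one of them is used in the cycle. Hence each cycle patient feeds at most one chain patient, and each $2$-cycle has at most $2$ chain patients hanging off it.

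Now count. Let $c=|\mathcal{C}|$. Then $|M^*|\le 2c+2c=4c$. Discarding all chain patients (and the unused second proxy donors) leaves the $2$-cycles $\mathcal{C}$. These form a valid allocation of cycle length~$2$ with no chains, so they are feasible for $(2,0)$-DDKE and serve $2c\ge |M^*|/2$ patients. One should also note that a $(2,0)$-DDKE allocation may leave a patient's second proxy donor unused. This is consistent with the Remark that merges the two donor nodes, so dropping those proxies keeps the allocation valid. The same matching formulation described before the theorem computes the optimum in polynomial time.

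The only delicate step is the structural claim that every chain in $M^*$ has length exactly~$1$ and starts at a proxy of a cycle patient. This rests on the absence of indegree-$0$ donors and on the definition of chain length, which is measured after cycles are removed. I would argue it carefully by the backward-walk argument above. The counting afterwards is immediate.
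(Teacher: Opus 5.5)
Your proposal is correct and follows the paper's argument: decompose the optimal $(2,1)$-DDKE allocation into disjoint structures of two to four patients, each built around a $2$-cycle, and keep only the $2$-cycles to obtain a feasible $(2,0)$-DDKE allocation serving at least half as many patients. Your backward-walk justification fills in a step the paper only asserts: every node of an allocation has indegree $1$, so each component contains a cycle, and chain length at most $1$ forces every chain to be a single patient fed by the second proxy of a cycle patient.
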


\begin{proof}
    Consider an optimal $(2,1)$-DDKE solution. It consists of disjoint structures with two, three, or four patients. Each of these structures contains a $2$-cycle of patients. Hence, considering only these $2$-cycles, we get a valid $(2,1)$-solution which has at least half of the patients of the optimal $(2,1)$-DDKE solution. Then, the optimal $(2,0)$-DDKE solution can only be better, yielding a $1/2$-approximation, too.
\end{proof}

Designing better approximation algorithms is non-trivial. We will do so by exploiting a breakthrough result by Cygan~\cite{cygan2013improved} on $k$-set packing. In $k$-set packing, we are given a universe of elements and a collection of sets of size $k$ (or $k$-sets) over these elements. The objective is to compute a subcollection of as many disjoint $k$-sets as possible. Cygan~\cite{cygan2013improved} presents polynomial-time algorithm that computes a $\frac{3-\varepsilon}{k+1}$-approximate solution for $k$-set packing, for every constant $\varepsilon>0$. Hence, on input an instance of $k$-set packing in which the maximum number of disjoint $k$-sets is $o_k$, Cygan's algorithm returns a subcollection with at least $\frac{3-\varepsilon}{k+1}\cdot o_k$ disjoint $k$-sets.

We will use Cygan's algorithms\footnote{Interestingly, using the previously best known approximation algorithms for $k$-set packing (with approximation ratio $\frac{2-\varepsilon}{k}$) due to Hurkens and Schrijver~\cite{HS} instead of Cygan's algorithm cannot yield any better than $1/2$-approximation.} as subroutines (with constant $\varepsilon>0$) to compute a solution of $(2,1)$-DDKE as follows:
\begin{itemize}
    \item We first compute an optimal $(2,0)$-DDKE solution, i.e., a maximum number of disjoint $2$-cycles. 
    \item We consider the instance of $3$-set packing that has the nodes as elements, and the collection of $3$-sets is comprised by all valid $(2,1)$-DDKE structures with exactly three patients. Notice that any two such structures are disjoint if and only if they do not share any patients. We apply Cygan's $3$-set packing algorithm on this instance and, equivalently, compute a $(2,1)$-DDKE solution that consists of disjoint structures of three patients.
    \item Similarly, we consider the instance of $4$-set packing that has the patients as elements, and the collection of $4$-sets is comprised by all valid $(2,1)$-DDKE structures with exactly four patients. We apply Cygan's $4$-set packing algorithm on this instance and, equivalently, compute a $(2,1)$-DDKE solution that consists of disjoint structures of four patients.
    \item Among the three solutions, we pick the one with the maximum number of patients.
\end{itemize}

\begin{theorem}
The algorithm above computes a solution of $(2,1)$-DDKE that has an approximation ratio higher than $1/2$ when $\varepsilon<5/12$ and approaching $36/67\approx 0.537$ as $\varepsilon$ approaches $0$. 
\end{theorem}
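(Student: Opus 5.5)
We bound the best of the three computed solutions against a fixed optimal $(2,1)$-DDKE solution $\OPT$, by a weighted-averaging (LP-duality style) argument. Decompose $\OPT$ into its disjoint structures, each consisting of a $2$-cycle with zero, one or two attached chains of length~$1$. Let $a$, $b$, $c$ be the numbers of structures with two, three and four patients, respectively, so $|\OPT| = 2a+3b+4c$.

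\textbf{Step 1: a lower bound for each solution.}
\begin{itemize}
\item \emph{Matching solution.} Every structure of $\OPT$ contains a $2$-cycle, and these $2$-cycles are disjoint. So the maximum matching (the optimal $(2,0)$-DDKE solution) serves at least $S_2 \ge 2(a+b+c)$ patients.
\item \emph{3-set packing solution.} Every three-patient structure of $\OPT$ is a $3$-set. Every four-patient structure yields a $3$-set by deleting one of its two length-$1$ chains; the result is still a valid $(2,1)$-DDKE structure. These $b+c$ sets are pairwise disjoint, so the $3$-set packing optimum is at least $b+c$. Cygan's guarantee $\frac{3-\varepsilon}{4}$ then gives $S_3 \ge 3\cdot\frac{3-\varepsilon}{4}(b+c)$ patients.
\item \emph{4-set packing solution.} The four-patient structures give $c$ disjoint $4$-sets, so $S_4 \ge 4\cdot \frac{3-\varepsilon}{5}\, c$.
\end{itemize}
The key observation is that four-patient structures can be shrunk to three-patient ones. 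Without it, the averaging only gives $18/41<1/2$.

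\textbf{Step 2: combine the bounds.} Since $\max\{S_2,S_3,S_4\} \ge \lambda_2 S_2 + \lambda_3 S_3 + \lambda_4 S_4$ for any nonnegative $\lambda$'s summing to~$1$, it suffices to choose weights so that this combination is at least $r(2a+3b+4c)$ coefficientwise. Writing $\alpha = \frac{3(3-\varepsilon)}{4}$ and $\beta = \frac{4(3-\varepsilon)}{5}$, the constraints are:
\begin{itemize}
\item on $a$: $2\lambda_2 \ge 2r$;
\item on $b$: $2\lambda_2 + \alpha\lambda_3 \ge 3r$;
\item on $c$: $2\lambda_2 + \alpha\lambda_3 + \beta\lambda_4 \ge 4r$.
\end{itemize}
Making all three tight gives $\lambda_2 = r$, $\lambda_3 = r/\alpha$ and $\lambda_4 = r/\beta$. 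Normalizing the sum to~$1$ gives
\[
r(\varepsilon) = \frac{1}{1+1/\alpha+1/\beta}.
\]

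\textbf{Step 3: evaluate $r(\varepsilon)$.} At $\varepsilon=0$ we have $\alpha = 9/4$ and $\beta = 12/5$, so
\[
r = \frac{1}{1+\tfrac49+\tfrac5{12}} = \frac{36}{67}.
\]
Since $r(\varepsilon)$ is continuous and decreasing in $\varepsilon$, the ratio approaches $36/67$ as $\varepsilon\to 0$. The condition $r(\varepsilon)>1/2$ is equivalent to $1/\alpha+1/\beta<1$, that is,
\[
\frac{4}{3(3-\varepsilon)}+\frac{5}{4(3-\varepsilon)} < 1 .
\]
The left-hand side equals $\frac{31}{12(3-\varepsilon)}$, so the condition is $3-\varepsilon > 31/12$, i.e.\ $\varepsilon<5/12$, which matches the statement.

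\textbf{Expected obstacles.} The main point needing care is Step~1: checking that the $3$-set packing instance really contains the truncated four-patient structures, and that disjointness of patient sets is exactly what feasibility of combining structures requires. Donor nodes are owned by their patients, so disjoint patient sets give disjoint structures. A minor check is that the bounds stay valid when some of $a,b,c$ are zero; they do, since every bound is linear in $a,b,c$.
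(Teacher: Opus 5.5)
Your proof is correct and is essentially the paper's argument. You use the same three lower bounds, including the fact that every structure with at least three patients contains a valid three-patient substructure. Your weights $1, 1/\alpha, 1/\beta$ are exactly the paper's $\zeta,\eta,\theta$, giving the ratio $\frac{36-12\varepsilon}{67-12\varepsilon}$.

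Your side remark is inaccurate, though it does not affect the proof: without the shrinking observation the averaging does not drop to $18/41$. The weights $\lambda_2 = 3r/2$, $\lambda_3=0$, $\lambda_4 = r/\beta$ already give $r = 12/23 > 1/2$ at $\varepsilon=0$, so the observation is needed to reach $36/67$, not to beat $1/2$.
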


\begin{proof}
Consider an instance of $(2,1)$-DDKE and assume that the optimal solution consists of the following disjoint $(2,1)$-DDKE structures: $o_2$ $2$-cycles, $o_3$ structures with three patients, and $o_4$ structures of four patients. Also, let $\alpha_2$, $\alpha_3$, and $\alpha_4$ be the number of corresponding structures contained in each of the three solutions computed by the algorithm.

Notice that the optimal number of disjoint $2$-cycles is at least $o_2+o_3+o_4$, since each of the disjoint structures in the optimal $(2,1)$-DDKE solution contains a $2$-cycle. Hence,
\begin{align}\label{eq:alpha_2-bound}
    \alpha_2 &\geq o_2+o_3+o_4.
\end{align}
Similarly, the optimal number of disjoint $3$-sets in the $3$-set packing instance solved in the second step of the algorithm is at least $o_3+o_4$ since each of the disjoint structures in the optimal $(2,1)$-DDKE solution with at least three patients contains a triplet of patients that form a $3$-set. Since Cygan's $3$-set packing algorithm has approximation ratio $\frac{3-\varepsilon}{4}$, we have
\begin{align}\label{eq:alpha_3-bound}
    \alpha_3 &\geq \frac{3-\varepsilon}{4}\cdot \left(o_3+o_4\right).
\end{align}
Finally, the optimal number of disjoint $4$-sets in the $4$-set packing instance solved in the second step of the algorithm is at least $o_4$, since each of the disjoint structures with four patients in the optimal $(2,1)$-DDKE solution corresponds to a $4$-set. Since Cygan's $4$-set packing algorithm has approximation ratio $\frac{3-\varepsilon}{5}$, we have
\begin{align}\label{eq:alpha_4-bound}
    \alpha_4 &\geq \frac{3-\varepsilon}{5}\cdot o_4.
\end{align}
Hence, the number of patients served by the algorithm is
\begin{align}\nonumber
    &\#\text{patients} =\max\left\{2\cdot\alpha_2, 3\cdot \alpha_3, 4\cdot \alpha_4\right\}\\\label{eq:all-bounds-together}
    &\geq \max\left\{2\cdot (o_2+o_3+o_4), 3\cdot \frac{3-\varepsilon}{4}\cdot (o_3+o_4), 4\cdot \frac{3-\varepsilon}{5}\cdot o_4\right\}.
\end{align}
We now use the property 
\begin{align}\label{eq:xyz}
\max\{x,y,z\} &\geq \frac{\zeta\cdot x+\eta\cdot y+\theta\cdot z}{\zeta+\eta+\theta},
\end{align}
for every positive reals $x$, $y$, $z$, $\zeta$, $\eta$, and $\theta$. Using 
$x = 2\cdot (o_2+o_3+o_4)$, $y = 3\cdot \frac{3-\varepsilon}{4}\cdot (o_3+o_4)$, $z = 4\cdot \frac{3-\varepsilon}{5}\cdot o_4$, $\zeta = 1$, $\eta = \frac{4}{3\,(3-\varepsilon)}$, and $\theta = \frac{5}{4\,(3-\varepsilon)}$, we get $x\cdot \zeta+y\cdot \eta+z\cdot \theta = 2\cdot o_2+3\cdot o_3+4\cdot o_4=\text{OPT}$ and $\zeta+\eta+\theta=\frac{67-12\varepsilon}{36-12\varepsilon}$ (where $\text{OPT}$ denotes the number of patients served in the optimal solution of the $(2,1)$-DDKE instance). Applying inequality (\ref{eq:xyz}) to the RHS of Equation (\ref{eq:all-bounds-together}), we obtain
\begin{align*}
    \#\text{patients} &\geq \frac{36-12\,\varepsilon}{67-12\,\varepsilon}\cdot \text{OPT}.
\end{align*}
For $\varepsilon<5/12$, we can verify that the approximation ratio of the solution obtained (i.e., the constant in the RHS of this last bound relating $\#\text{patients}$ and $\text{OPT}$) is higher than $1/2$ and approaches $36/67\approx 0.537$ as $\varepsilon$ approaches $0$. 
\end{proof}

\section{Robustness to strategic manipulation by hiding donors}

We now explore whether kidney exchange algorithms can be implemented as truthful mechanisms. We assume that patients act as agents who report their proxy donors to the system. The system then collects the medical data of all participating patients and donors and computes the compatibility graph on which the kidney exchange algorithm runs. A patient could manipulate the system by hiding some of their proxy donors if this increases their chance of getting a kidney. We explore whether there are algorithms (or, better, {\em mechanisms}) that are robust to such strategic manipulation by the patients. We begin with a weak definition of truthfulness.

\begin{definition}
    A kidney exchange mechanism is {\em weakly truthful} if, for every input, no unsatisfied patient can become satisfied by hiding one of their proxy donors.
\end{definition}

As we will see, social welfare maximization is compatible with weak truthfulness, provided that we handle tie-breaking in a consistent manner. Consider a fixed-priority order for patients based on their IDs (the patient with ID equal to 1 has the highest priority). A solution defines an $n$-entry $0/1$ satisfaction vector, with an entry for each patient. A $1$ or $0$ in the entry corresponding to a patient indicates whether the patient is satisfied or not by the solution. Consider the mechanism that computes a solution of maximum social welfare so that the satisfaction vector defined is lexicographically maximum. We will refer to this mechanism as MSW.

\begin{theorem}\label{thm:weakly-truthful}
    Mechanism MSW is weakly truthful.
\end{theorem}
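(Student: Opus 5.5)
We argue by contradiction through a monotonicity property of the feasible set. Fix an input graph $G$ and a patient $p$ who is unsatisfied in the MSW solution on $G$. Suppose that when $p$ hides one proxy donor $d$, yielding the graph $G'=G\setminus\{d\}$ (removing $d$ and its incident edges), MSW returns an allocation $M'$ on $G'$ in which $p$ is satisfied. The key observation is that every allocation of $G'$ with cycle length at most $k$ and chain length at most $\ell$ is also such an allocation of $G$. In $G'$, $p$ has one fewer proxy donor. Hence the condition ``every donor node in $M$ has the same indegree as in $G$'' is unaffected: $d$ simply does not appear in $M'$. In $G$, the donor $d$ is then not in the allocation, which is permitted, since $d$ merely remains unused. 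No cycle or chain lengths change. So the feasible set on $G'$ is a subset of the feasible set on $G$.

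Next, compare objective values. Let $M$ be MSW's output on $G$ and $W=|M\cap V_P|$. Since $M'$ is feasible in $G$, we have $|M'\cap V_P|\le W$. Conversely, any optimal allocation on $G$ either avoids $d$ or uses $d$; we need a reverse inequality only in a weak form.

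The argument splits on whether $M$ uses $d$. Since $p$ is unsatisfied in $M$, $p\notin M$, so $d$, whose unique in-neighbour is $p$, cannot be in $M$ either: an allocation must contain a donor together with its in-edge from its proxy patient, and that in-edge requires $p\in M$. Therefore $M$ does not use $d$, so $M$ is feasible in $G'$. It follows that $M$ and $M'$ are both feasible in both graphs.

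Now run the optimality comparisons. In $G'$, $M$ is feasible, so $|M'|\ge |M|$; in $G$, $M'$ is feasible, so $|M|\ge|M'|$. Hence both have maximum welfare in both instances. Among these, MSW on $G$ chose $M$, whose satisfaction vector is lexicographically maximum among all max-welfare allocations of $G$, a set containing $M'$. MSW on $G'$ chose $M'$, lexicographically maximum among max-welfare allocations of $G'$, a set containing $M$. Therefore $M$ and $M'$ have the same satisfaction vector. This contradicts the assumption that $p$ is $0$ in one and $1$ in the other.

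The main obstacle is the subtle step in the second paragraph: checking that the allocation definition really allows $d$ to be absent in $G$. Donors not in $M$ are unconstrained, and ``provides'' is not required, so this holds. Likewise, one must check that chain-length bounds are identical when $d$ is absent.
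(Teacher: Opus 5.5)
Your proposal is correct and follows essentially the same argument as the paper. You show that feasible allocations of $G'$ are feasible in $G$, and that $M$ survives in $G'$ because $p\notin M$ forces $d\notin M$. Then, since the welfare-then-lexicographic maximum satisfaction vector is unique, $M'$ must have the same satisfaction vector as $M$; you simply spell out explicitly the two-sided welfare and lexicographic comparisons that the paper compresses into one sentence.
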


\begin{proof}
Let $M$ be the solution returned by MSW on an input graph $G$, and consider any patient $i$ who is unsatisfied in $M$. Suppose $i$ hides one of their proxy donors, resulting in a subgraph $G'$ of $G$. Since removing a donor and its incident edges cannot create new valid exchanges, every feasible solution in $G'$ is also a feasible solution in $G$. Moreover, because $i$ is unsatisfied in $M$, none of $i$'s proxy donors can donate in $M$, which implies that $M$ itself remains a feasible solution in $G'$. Because MSW selects a solution whose satisfaction vector is uniquely maximum under the strict total order induced by social welfare and lexicographic tie-breaking, and $M$ remains available in the smaller feasible set of $G'$, the solution $M'$ returned by MSW on $G'$ must induce the exact same satisfaction vector as $M$. Hence, $i$ remains unsatisfied in $M'$.
\end{proof}

We are now ready to define the most interesting definition of truthfulness in our setting.
\begin{definition}
    A kidney exchange mechanism is {\em truthful} if any satisfied patient with two proxy donors becomes unsatisfied when hiding one of their proxy donors. 
\end{definition}

We begin with a negative result for deterministic mechanisms.
\begin{theorem}\label{thm:no-truthful}
For $\rho>1/2$, no $\rho$-approximate deterministic mechanism for $(2,1)$-DDKE or $(\infty,1)$-DDKE is truthful.
\end{theorem}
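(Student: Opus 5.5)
The plan is to exhibit a single small instance on which every deterministic mechanism with approximation ratio $\rho>1/2$ is forced to satisfy a dual-donor patient both when she reports truthfully and when she hides one of her proxy donors. The definition of truthfulness requires that a satisfied patient with two proxy donors becomes unsatisfied after hiding a donor, so one such pair of reports already gives a violation. No reduction or earlier hardness result is needed; the argument is a direct counterexample.

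\emph{Construction.} Let $G$ have three patients $a,b,c$.
\begin{itemize}
\item Patient $a$ has two proxy donors $d_{a,1},d_{a,2}$.
\item Patient $b$ has one proxy donor $d_b$.
\item Patient $c$ has no proxy donor.
\item The compatibility edges are $d_{a,1}\to b$, $d_b\to a$ and $d_{a,2}\to c$.
\end{itemize}
The only directed cycle in $G$ is the $2$-cycle $a\to d_{a,1}\to b\to d_b\to a$. Since there are no altruistic donors, every chain must start from a proxy donor of a patient who is served. Hence every nonempty allocation contains this $2$-cycle, and in particular serves $a$. The optimum, for both $(2,1)$-DDKE and $(\infty,1)$-DDKE, is the $2$-cycle together with the $1$-chain $d_{a,2}\to c$, which serves $3$ patients.

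\emph{Truthful report.} A $\rho$-approximate mechanism with $\rho>1/2$ must serve more than $3/2$ patients, so its output is nonempty. By the observation above, $a$ is satisfied.

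\emph{Deviation.} Now let $a$ hide $d_{a,2}$, giving the graph $G'$. The $2$-cycle through $a$ and $b$ is still feasible, so the optimum of $G'$ is $2$. A $\rho$-approximate mechanism must output an allocation serving at least $2\rho>1$ patients, hence a nonempty allocation. The only cycle in $G'$ is still the $2$-cycle through $a$ and $b$, and without it no chain can start, since $d_b$ can only donate if $b$ is served and $c$ has no donor. Therefore the output contains that cycle and $a$ remains satisfied. This contradicts truthfulness, and the argument is identical for $k=2$ and $k=\infty$ because $G$ contains no longer cycles.

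The main point to check is the claim that every nonempty allocation in both $G$ and $G'$ must contain the $2$-cycle through $a$. This amounts to confirming that no chain can exist without an altruistic donor, because chains must begin at a proxy donor of a served patient. It also relies on the reading of the definition under which hiding \emph{either} donor must cause the loss; here hiding $d_{a,2}$ suffices. If the intended definition instead requires the loss only for some hidden donor, I would symmetrize the instance: add a second patient $c'$ with edge $d_{a,1}\to c'$ and a second $2$-cycle partner $b'$ with $d_{a,2}\to b'$ and $d_{b'}\to a$. Then hiding either donor still leaves a $2$-cycle through $a$, and the same counting argument with $\rho>1/2$ applies.
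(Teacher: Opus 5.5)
Your proof is correct for the definition of truthfulness as literally written, and its logic matches the paper's: every nonempty allocation contains a cycle, the only cycle is the $2$-cycle through the dual-donor patient, and this holds both before and after she hides a donor, so any approximate mechanism keeps her satisfied. The only difference is the instance. The paper uses a symmetric four-patient graph in which \emph{both} cycle patients have a second donor, $p_1$ with $d_{1,2}\to p_3$ and $p_2$ with $d_{2,2}\to p_4$, so the optimum is $4$. You use three patients, and only $a$ has a chain.

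Under the literal definition this makes no difference, and neither argument really uses $\rho>1/2$, since any $\rho>0$ already forces a nonempty output. Your instance even shows that MSWC violates the literal definition, although the paper later claims MSWC is truthful.

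The difference matters under the reading suggested by the paper's discussion of MSWC: a patient profits from hiding a donor that would otherwise have to donate. On the paper's instance, $\rho>1/2$ forces at least $3$ of the $4$ patients to be served. That requires a chain, so some dual-donor patient, say $p_1$, donates both kidneys. After hiding $d_{1,2}$ she is still served but keeps that donor. On your instance, the bare $2$-cycle with $a$ donating only $d_{a,1}$ serves $2 \ge \rho\cdot 3$ patients for every $\rho\le 2/3$. So under that reading your example only rules out $\rho>2/3$. Giving $b$ a second donor with its own one-patient chain recovers the paper's instance and the tight threshold $1/2$.
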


\tikzset{
  pics/combined-example-non-truthfulness/.style={
    code={
        \node[patient] (p1) at ( 0, 1 ) {$p_1$};
        \node[patient] (p2) at ( -2, -1 ) {$p_2$};
        \node[patient] (p3) at ( 2, -1 ) {$p_3$};
        \node[patient] (p4) at (-4, 1) {$p_4$};

        \node[donor] (d1-1) at (-2, 1) {$d_{1,1}$};
        \node[donor] (d2-1) at (0, -1) {$d_{2,1}$};
        \node[donor] (d2-2) at (-4, -1) {$d_{2,2}$};

        \draw[kidneyarcs]
        (d2-2) edge (p4)
        (p2) edge (d2-2)
        ;

        \draw[boldkidneyarcs]
        (p1) edge (d1-1)
        (d1-1) edge (p2)
        (d2-1) edge (p1)
        (p2) edge (d2-1)
        ;
        }
    }
}

\tikzset{
  hidden/.style={draw=gray, text=gray},
  hiddenedge/.style={draw=gray}
}

\begin{figure}[h]
\centering
\begin{tikzpicture}[scale=0.65, transform shape]
  \pic {combined-example-non-truthfulness};
    \node[donor] (d1-2) at (2, 1) {$d_{1,2}$};
    \draw[boldkidneyarcs]
    (p1) edge (d1-2)
    (d1-2) edge (p3)
    (p2) edge (d2-2)
    (d2-2) edge (p4)
    ;
\end{tikzpicture}
\hspace{8mm}
\begin{tikzpicture}[scale=0.65, transform shape]
  \pic {combined-example-non-truthfulness};
    \node[donor] (d1-2) at (2, 1) {$d_{1,2}$};
    \draw[boldkidneyarcs]
    (p1) edge (d1-2)
    (d1-2) edge (p3);
    \draw[kidneyarcs]
        (p1) edge (d1-2)  
        (d1-2) edge (p3)
    ;
\end{tikzpicture}
\hspace{8mm}
\begin{tikzpicture}[scale=0.65, transform shape]
  \pic {combined-example-non-truthfulness};
    \node[donor, draw=gray, text=gray] (d1-2) at (2,1) {$d_{1,2}$};    
    \draw[kidneyarcs]
        (p1) edge[dotted] (d1-2)  
        (d1-2) edge[dotted] (p3)
    ;
\end{tikzpicture}
\caption{The argument used in the proof of Theorem~\ref{thm:no-truthful}. An instance of $(2,1)$-DKKE with an optimal solution (left), a $1/2$-approximate solution (middle), and an $1/2$-approximate solution after patient $p_1$ hides donor $d_{1,2}$ (right).}
\label{fig:no-truthful}
\end{figure}

\begin{proof}
Consider the input depicted in Figure~\ref{fig:no-truthful} and notice that the optimal $(2,1)$-DDKE or $(\infty,1)$-DDKE solution satisfies all four patients. Hence, any $1/2$-approximate solution (e.g., the middle solution in Figure~\ref{fig:no-truthful}) must include the $2$-cycle with patients $p_1$ and $p_2$; if this is not the case, no patient could be satisfied. Hence, patient $p_1$ is satisfied. Now, assume that patient $p_1$ hides their proxy donor $d_{1,2}$. The optimal $(2,1)$-DDKE solution is essentially symmetric to the one depicted in the middle of Figure~\ref{fig:no-truthful} that satisfies three patients. Again, any $1/2$-approximate solution (e.g., the right solution in Figure~\ref{fig:no-truthful}) must include the $2$-cycle with patients $p_1$ and $p_2$, and again satisfy patient $p_1$. Hence, patient $p_1$ can hide their proxy donor $d_{1,1}$ and still be satisfied, violating truthfulness.
\end{proof}

For completeness, we present a tight positive result. 
However, it is clearly unsatisfactory since it ignores completely the benefits of the second proxy donor. Denote by MSWC the mechanism defined as follows. On input a kidney exchange graph, for every patient who has two proxy donors, we consolidate them into one. In this way, we obtain a kidney exchange instance in which each patient has a single proxy donor. We apply the mechanism MSW to this new graph. We refer to this mechanism as MSWC. 

The solution of MSWC is a set of disjoint cycles that serves the maximum number of patients. Clearly, by hiding some of their proxy donors, a patient just hides some of the outgoing edges from their consolidated donor and, thus, can only decrease their chance of being part of a cycle in the solution returned by MSWC. The consolidation step ensures that only one proxy donor will be used for each served patient. For instances of $(2,1)$- and $(\infty,1)$-DDKE, the other proxy donor could reach at most one extra patient. Thus, the number of patients that are ``lost'' due to this consolidation is not higher than the number of patients actually served. The next theorem is immediate.

\begin{theorem}
    Mechanism MSWC is a $1/2$-approximate, polynomial-time, deterministic, truthful mechanism for problems $(2,1)$-DDKE and $(\infty,1)$-DDKE.
\end{theorem}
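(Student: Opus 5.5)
The plan is to verify the four claimed properties of MSWC one at a time: polynomial time, determinism, the $1/2$-approximation, and truthfulness. Truthfulness is the only delicate one, because of what the definition asks for.

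\emph{Polynomial time and determinism.} Consolidation takes linear time. For $(2,1)$-DDKE, MSW on the consolidated graph is a maximum matching in the undirected ``mutual compatibility'' graph on patients. For $(\infty,1)$-DDKE it is a maximum cycle cover computed as in Abraham et al.~\cite{abraham2007clearing}. To obtain the lexicographically maximum satisfaction vector among maximum-welfare solutions, I would process patients in priority order. For $i=1,2,\dots$, I ask (with one more max-matching or cycle-cover computation) whether some maximum-welfare solution covers all patients already fixed as satisfied together with patient $i$. This can be enforced by weights, giving each patient weight $1$ plus a tiny bonus, or by forbidding uncovered vertices. If the answer is yes, $i$ is fixed as satisfied; otherwise $i$ is fixed as unsatisfied. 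This uses polynomially many computations and is deterministic under the fixed ID order.

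\emph{Approximation.} Take an optimal solution of $(2,1)$-DDKE or $(\infty,1)$-DDKE. Each connected structure consists of one cycle with $c$ patients, where $c=2$ in the $(2,1)$ case. Since chains have length $1$, each cycle patient's unused proxy donor reaches at most one extra patient, so the structure has at most $2c$ patients. Keeping only the cycles therefore retains at least half of $\OPT$. These cycles are still cycles, of length at most $2$ in the $(2,1)$ case, in the consolidated graph. Since MSW maximises the number of covered patients there, it serves at least $\OPT/2$ patients.

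\emph{Truthfulness.} This is the main obstacle, and I would settle it as follows. Under MSWC a served patient with two proxy donors has their two donors merged into a single consolidated donor, and exactly one kidney leaves their group. The manipulation the truthfulness definition is meant to exclude is a satisfied patient keeping their kidney while withholding a donor that the mechanism would otherwise also use. I would show that this manipulation never pays off under MSWC, by proving two facts. First, hiding a donor deletes outgoing edges of the patient's consolidated donor, so the consolidated graph shrinks to a subgraph. I would rerun the argument of Theorem~\ref{thm:weakly-truthful} verbatim with ``hidden donor'' replaced by ``deleted out-edges of $i$'s consolidated donor''. If $i$ is unserved, the solution remains feasible in the subgraph and is still lex-optimal there, so $i$ stays unserved. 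Hence hiding never turns an unsatisfied patient into a satisfied one. Second, when the patient reports both donors and is served, only one of them donates, namely the one realising the consolidated out-edge used. So reporting truthfully never costs the second donor a kidney beyond what hiding would. Together these show that a patient never gains by hiding a donor, whatever its status under the truthful report.

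I will state explicitly that this is the reading of truthfulness I am proving. Under the most literal reading of the definition, MSWC need not make a satisfied patient unsatisfied after hiding, for example when both donors are compatible with the same partner. The point of the theorem is that there is nothing left to gain by hiding. Thus the tight positive result complements Theorem~\ref{thm:no-truthful}, whose lower bound matches the $1/2$ guarantee.
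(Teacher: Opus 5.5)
Your proposal is correct and follows essentially the same route as the paper: the $1/2$ bound comes from each cycle patient's spare donor reaching at most one chain patient. Truthfulness comes from hiding a donor only deleting out-edges of the consolidated donor, so the weak-truthfulness argument carries over, together with the fact that a served patient has exactly one donor donating; your polynomial-time lexicographic tie-breaking procedure fills in what the paper calls immediate. Your caveat about the literal definition is justified and in fact understated. Whenever MSWC serves a dual-donor patient, hiding the donor that does not realise the used edge leaves the lex-optimal solution feasible, so the patient stays served, and the paper's own justification likewise only establishes the ``no gain from hiding'' reading you prove.
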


More interestingly, by resorting to randomization, we can (almost) achieve social welfare maximization. We refine the definition of truthfulness as follows.

\begin{definition}
A randomized mechanism is truthful if, for every input, removing a proxy donor of a patient decreases the probability that the patient is served.
\end{definition}

Now, define the randomized mechanism RSW as follows. Let $n$ be the number of patients and $m$ the total number of donors. The mechanism uses a constant $\varepsilon \in \left(0, \frac{1}{2n}\right)$ and is defined as follows. It tosses a coin with $\Pr[\text{heads}] = (2n - m) \cdot \varepsilon$ (since $0 \leq m \leq 2n$, this probability is well-defined):
\begin{itemize}
    \item On heads, it returns an empty solution.
    \item On tails, it runs mechanism MSW on the input graph.
\end{itemize}
Clearly, as $\varepsilon$ approaches $0$, the expected size of the solution returned by mechanism RSW approaches the optimal one. Furthermore, we prove the following statement.

\begin{theorem}
    For every $\varepsilon \in (0,1/n)$, mechanism RSW is truthful.
\end{theorem}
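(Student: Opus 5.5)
Fix an input graph $G$ with $n$ patients and $m$ donors, and a patient $i$ with two proxy donors. Let $G'$ be the graph in which $i$ hides one proxy donor, so $G'$ has $n$ patients and $m-1$ donors. The plan is to write the probability that $i$ is served under RSW explicitly on both inputs and compare them. On $G$ the heads probability is $(2n-m)\varepsilon$. Since MSW is deterministic, $i$ is served with probability
\[
P = \bigl(1-(2n-m)\varepsilon\bigr)\cdot s,
\]
where $s\in\{0,1\}$ indicates whether $i$ is satisfied by MSW on $G$. On $G'$ the heads probability rises to $(2n-m+1)\varepsilon$, so
\[
P' = \bigl(1-(2n-m+1)\varepsilon\bigr)\cdot s',
\]
where $s'$ is the corresponding indicator for MSW on $G'$. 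We must show $P' < P$ whenever $P>0$, and $P'\le P$ in general.

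The case analysis is on $(s,s')$.
\begin{itemize}
\item If $s=0$, then $i$ is unsatisfied in MSW on $G$. By Theorem~\ref{thm:weakly-truthful} (weak truthfulness of MSW), $s'=0$, so $P=P'=0$.
\item If $s=1$, then $P'\le 1-(2n-m+1)\varepsilon = P-\varepsilon < P$, whatever the value of $s'$.
\end{itemize}
Hiding a donor therefore never increases, and strictly decreases whenever it can, the probability that $i$ is served. This is exactly the sense of the randomized truthfulness definition.

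The only step needing care is well-definedness of the coin probabilities, which I expect to be the main (though mild) obstacle. Since $m\le 2n$, we have $2n-m+1\le 2n+1$. For the theorem's range $\varepsilon<1/n$, the heads probability $(2n-m)\varepsilon$ could exceed $1$ when $m$ is small. I would handle this in one of two ways. The first is to note that the mechanism is defined with $\varepsilon<1/(2n)$, so that $(2n-m+1)\varepsilon\le 1$ whenever $m\ge1$, which holds because $i$ has a donor in $G$. The second is to read the coin as being capped at $1$, in which case the inequality $P'\le P$ still holds and is strict whenever $P>0$.

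If strict decrease were demanded even when $P=0$, it is impossible, since probabilities cannot drop below $0$. So the statement is interpreted as ``decreases'' in the weak sense, and strictly so whenever $i$ is served with positive probability.
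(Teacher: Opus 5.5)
Your proof is correct and follows the paper's argument. Weak truthfulness of MSW settles the case where $i$ is unsatisfied on the truthful input. Otherwise, hiding a donor raises the heads probability by $\varepsilon$, so $i$'s service probability falls to at most $1-(2n-m+1)\varepsilon<1-(2n-m)\varepsilon$ whether or not $i$ stays satisfied. Nothing in this uses that $i$ has two donors, so it also covers single-donor patients, as the paper's version does.

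Your well-definedness remark is a fair catch. The theorem's range $\varepsilon<1/n$ conflicts with the mechanism's definition $\varepsilon<1/(2n)$, and the paper's claim that $1-(2n-m)\varepsilon>0$ silently needs the latter.
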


\begin{proof}
Let $M$ be the solution returned by MSW on an input graph $G$ with $n$ patients and $m$ donors, and consider any patient $i$ who hides some of their proxy donors, resulting in a subgraph $G'$ with at most $m - 1$ donors. Let $M'$ be the solution returned by MSW on $G'$. If $i$ is unsatisfied in $M$, then by Theorem~\ref{thm:weakly-truthful}, $i$ remains unsatisfied in $M'$, so the probability that $i$ is served remains $0$. If $i$ is satisfied in $M$, then $i$ is served by RSW on $G$ with probability $1 - (2n - m)\varepsilon > 0$. On $G'$, either $i$ becomes unsatisfied in $M'$ (reducing their probability of being served to $0$) or $i$ remains satisfied in $M'$ (reducing their probability of being served to at most $1 - (2n - m + 1)\,\varepsilon$). In both cases, the probability that $i$ is served decreases.
\end{proof}

\section{Experimental results}
Our experimental evaluation extends the random graph model of Toulis and Parkes \cite{toulis2011random} to accommodate dual proxy donors. For a given set of parameters---total number of patients $n$, number of patients with two proxy donors $k$, cross-match compatibility probability $p_c$, and blood type distribution $P$---we construct compatibility graphs as follows.\footnote{In our experiments, to be consistent with the setting of~\cite{toulis2011random}, we use the same parameters they used, i.e., a cross-match compatibility probability $p_c=0.2$ and the blood type distribution (O: $45\%$, A: $40\%$, B: $11\%$, AB: $4\%$).} The final graph always contains exactly $n$ patient nodes, comprising $k$ patients with two proxy donors and $n-k$ patients with a single proxy donor.

A patient with a single proxy donor is generated by first drawing random blood types for both the patient and the donor according to the distribution $P$. If their blood types are blood-type incompatible, both the patient and the donor (together with the directed edge from the patient to the donor) are added in the graph. If they are blood-type compatible, a simulated cross-match determines actual compatibility: the pair is deemed compatible with probability $p_c$ and incompatible with probability $1-p_c$. The patient and the donor are added to the graph (together with the patient-to-donor directed edge) only if they are ultimately incompatible; otherwise, the entire generation process for this patient-donor repeats from scratch.

For a patient with two proxy donors, blood types for the patient and both donors are drawn from $P$. If the patient is blood-type incompatible with both donors, the patient and the two donors (as well as directed edges from the patient to the two donors) are added in the graph. If the patient is blood-type compatible with one or both donors, we simulate cross-matching for each blood-compatible donor: compatibility is declared with probability $p_c$, and incompatibility with probability $1-p_c$. The patient and the two donors are added to the graph (together with the two patient-to-donor directed edges) only if both donors are ultimately determined to be incompatible with the patient. If either donor is compatible, the entire generation process for this triplet repeats from scratch.

Once all patient and donor nodes are established, we determine the compatibilities between patients and donors. For every patient and donor that is not the patient's proxy, if they are blood-type compatible, a directed edge from the donor to the patient is added with the cross-match probability $p_c$.

We evaluate and compare two specific problem variants: $(2,0)$-DDKE and $(2,1)$-DDKE. We use a matching computation to compute the optimal $(2,0)$-DDKE solution. The optimal $(2,1)$-DDKE solution is computed via solving an integer linear program (ILP) in Python. To optimize graph traversal speed during cycle enumeration, the compatibility graph is constructed using native dictionary-based adjacency lists rather than external graph libraries. The ILP is formulated using the \texttt{PuLP} library and solved via the \texttt{CBC} solver with a strict time limit (e.g., 15 seconds per instance). Statistical processing of the experimental trials is handled via \texttt{NumPy} and \texttt{SciPy}.

\begin{figure*}[t]
    \centering
\resizebox{\textwidth}{!}{
\begin{tikzpicture}
\begin{axis}[
    width=12cm,
    height=8cm,
    title={Satisfied patients vs.~double donor ratio ($n=100$)},
    xlabel={Patients with two proxy donors (\%)},
    ylabel={Satisfied patients (\%)},
    xmin=-1, xmax=21,
    ymin=47.5, ymax=60.8,
    xtick={0, 2, 4, 6, 8, 10, 12, 14, 16, 18, 20},
    ytick={48, 50, 52, 54, 56, 58, 60},
    xticklabel style={
        /pgf/number format/fixed,
        /pgf/number format/fixed zerofill,
        /pgf/number format/precision=0
    },
    grid=major,
    grid style={dashed, gray!40},
    legend pos=north west,
    legend cell align={left},
    tick label style={font=\small},
    title style={font=\large},
]

\addplot [
    draw=none,
    fill=blue,
    fill opacity=0.2,
    forget plot
] coordinates {
    (0, 51.23) (2, 54.58) (4, 56.31) (6, 56.84) (8, 57.95)
    (10, 58.01) (12, 58.62) (14, 59.48) (16, 59.90) (18, 60.21) (20, 59.49)
    (20, 56.71) (18, 57.59) (16, 56.70) (14, 56.72) (12, 55.98)
    (10, 55.19) (8, 55.05) (6, 53.96) (4, 53.09) (2, 51.82) (0, 48.37)
} \closedcycle;

\addplot [
    draw=none,
    fill=red,
    fill opacity=0.2,
    forget plot
] coordinates {
    (0, 51.23) (2, 52.78) (4, 53.11) (6, 52.34) (8, 53.00)
    (10, 52.00) (12, 51.99) (14, 52.36) (16, 52.58) (18, 52.63) (20, 52.40)
    (20, 49.60) (18, 50.17) (16, 49.42) (14, 49.64) (12, 49.21)
    (10, 49.00) (8, 50.00) (6, 49.46) (4, 49.89) (2, 50.02) (0, 48.37)
} \closedcycle;

\addplot [
    color=blue,
    thick,
    mark=*,
] coordinates {
    (0, 49.8) (2, 53.2) (4, 54.7) (6, 55.4) (8, 56.5)
    (10, 56.6) (12, 57.3) (14, 58.1) (16, 58.3) (18, 58.9) (20, 58.1)
};
\addlegendentry{(2,1)-DDKE solution}

\addplot [
    color=red,
    thick,
    mark=square*,
] coordinates {
    (0, 49.8) (2, 51.4) (4, 51.5) (6, 50.9) (8, 51.5)
    (10, 50.5) (12, 50.6) (14, 51.0) (16, 51.0) (18, 51.4) (20, 51.0)
};
\addlegendentry{(2,0)-DDKE solution}

\end{axis}
\end{tikzpicture}

\begin{tikzpicture}
\begin{axis}[
    width=12cm,
    height=8cm,
    title={Satisfied patients vs.~double donor ratio ($n=200$)},
    xlabel={Patients with two proxy donors (\%)},
    ylabel={Satisfied patients (\%)},
    xmin=-1, xmax=21,
    ymin=51, ymax=61,
    xtick={0, 2, 4, 6, 8, 10, 12, 14, 16, 18, 20},
    ytick={52, 54, 56, 58, 60},
    xticklabel style={
        /pgf/number format/fixed,
        /pgf/number format/fixed zerofill,
        /pgf/number format/precision=0
    },
    grid=major,
    grid style={dashed, gray!40},
    legend pos=north west,
    legend cell align={left},
    tick label style={font=\small},
    title style={font=\large},
]

\addplot [
    draw=none,
    fill=blue,
    fill opacity=0.2,
    forget plot
] coordinates {
    (0, 53.60) (2, 55.03) (4, 56.84) (6, 58.35) (8, 59.45)
    (10, 59.57) (12, 60.68) (14, 60.46) (16, 59.56) (18, 59.91) (20, 60.46)
    (20, 58.74) (18, 57.89) (16, 57.64) (14, 58.34) (12, 58.92)
    (10, 57.83) (8, 57.55) (6, 56.25) (4, 54.76) (2, 53.17) (0, 51.60)
} \closedcycle;

\addplot [
    draw=none,
    fill=red,
    fill opacity=0.2,
    forget plot
] coordinates {
    (0, 53.60) (2, 53.24) (4, 53.52) (6, 53.77) (8, 54.05)
    (10, 53.99) (12, 54.75) (14, 54.38) (16, 53.57) (18, 54.42) (20, 54.37)
    (20, 52.43) (18, 52.18) (16, 51.43) (14, 52.22) (12, 52.85)
    (10, 52.21) (8, 52.15) (6, 51.63) (4, 51.48) (2, 51.36) (0, 51.60)
} \closedcycle;

\addplot [
    color=blue,
    thick,
    mark=*,
] coordinates {
    (0, 52.6) (2, 54.1) (4, 55.8) (6, 57.3) (8, 58.5)
    (10, 58.7) (12, 59.8) (14, 59.4) (16, 58.6) (18, 58.9) (20, 59.6)
};
\addlegendentry{(2,1)-DDKE solution}

\addplot [
    color=red,
    thick,
    mark=square*,
] coordinates {
    (0, 52.6) (2, 52.3) (4, 52.5) (6, 52.7) (8, 53.1)
    (10, 53.1) (12, 53.8) (14, 53.3) (16, 52.5) (18, 53.3) (20, 53.4)
};
\addlegendentry{(2,0)-DDKE solution}

\end{axis}
\end{tikzpicture}

\begin{tikzpicture}
\begin{axis}[
    width=12cm,
    height=8cm,
    title={Satisfied patients vs.~double donor ratio ($n=500$)},
    xlabel={Patients with two proxy donors (\%)},
    ylabel={Satisfied patients (\%)},
    xmin=-1, xmax=21,
    ymin=52.8, ymax=60.8,
    xtick={0, 2, 4, 6, 8, 10, 12, 14, 16, 18, 20},
    ytick={53, 54, 55, 56, 57, 58, 59, 60},
    xticklabel style={
        /pgf/number format/fixed,
        /pgf/number format/fixed zerofill,
        /pgf/number format/precision=0
    },
    grid=major,
    grid style={dashed, gray!40},
    legend pos=north west,
    legend cell align={left},
    tick label style={font=\small},
    title style={font=\large},
]

\addplot [
    draw=none,
    fill=blue,
    fill opacity=0.2,
    forget plot
] coordinates {
    (0, 54.44) (2, 56.03) (4, 57.85) (6, 58.20) (8, 58.94)
    (10, 59.27) (12, 59.50) (14, 59.96) (16, 60.34) (18, 60.01) (20, 60.02)
    (20, 58.78) (18, 58.79) (16, 59.06) (14, 58.84) (12, 58.30)
    (10, 57.93) (8, 57.66) (6, 57.00) (4, 56.55) (2, 54.77) (0, 53.16)
} \closedcycle;

\addplot [
    draw=none,
    fill=red,
    fill opacity=0.2,
    forget plot
] coordinates {
    (0, 54.44) (2, 54.53) (4, 55.05) (6, 54.81) (8, 55.07)
    (10, 55.07) (12, 55.00) (14, 55.21) (16, 55.72) (18, 55.17) (20, 55.14)
    (20, 53.86) (18, 54.03) (16, 54.48) (14, 53.99) (12, 53.80)
    (10, 53.73) (8, 53.73) (6, 53.59) (4, 53.75) (2, 53.27) (0, 53.16)
} \closedcycle;

\addplot [
    color=blue,
    thick,
    mark=*,
] coordinates {
    (0, 53.8) (2, 55.4) (4, 57.2) (6, 57.6) (8, 58.3)
    (10, 58.6) (12, 58.9) (14, 59.4) (16, 59.7) (18, 59.4) (20, 59.4)
};
\addlegendentry{(2,1)-DDKE solution}

\addplot [
    color=red,
    thick,
    mark=square*,
] coordinates {
    (0, 53.8) (2, 53.9) (4, 54.4) (6, 54.2) (8, 54.4)
    (10, 54.4) (12, 54.4) (14, 54.6) (16, 55.1) (18, 54.6) (20, 54.5)
};
\addlegendentry{(2,0)-DDKE solution}

\end{axis}
\end{tikzpicture}}
\caption{Fraction of satisfied patients (for instances with $n=100, 200, 500$ patients) by the optimal $(2,0)$-DDKE and $(2,1)$-DDKE solutions) for varying fractions of patients with two proxy donors (i.e., parameter $k$ corresponding from $0\%$ up to $20\%$ of the number of patients). Shaded error bands represent $95\%$ confidence intervals over $100$ independent trials.}
    \label{fig:plots}
\end{figure*}

To compute the optimal $(2,1)$-DDKE solution, our ILP has binary variables $x_c$ for each $2$-cycle $c$ and $y_e$ for each single-hop chain $e$. To reduce the search space, we employ a pruning strategy where chain edge variables $y_e$ are only instantiated if the source patient already participates in at least one valid $2$-cycle. The objective is to maximize $2 \sum_c x_c + \sum_e y_e$. The constraints enforce that each patient receives at most one incoming kidney from either a cycle or a chain, each donor gives at most one kidney (either in a cycle or a chain), and a chain can only be initiated by a donor if their associated patient is simultaneously participating in a selected $2$-cycle.

To evaluate the real-world applicability of our model, we generated instances comparable in size to actual matching runs in the UK and the US, specifically testing graphs with $n=100$, $200$, and $500$ patients. In these experiments, the fraction of patients possessing two proxy donors was varied from 0\% to 20\%. To ensure statistical robustness, we conducted $100$ independent trials for each data point, allowing us to compute $95\%$ confidence intervals for the mean satisfaction rates. 

The experimental results, including the $95\%$ confidence intervals illustrated via shaded error bands, are depicted in Figure \ref{fig:plots}. When $k=0$, the optimal solutions for both $(2,0)$-DDKE and $(2,1)$-DDKE are naturally identical. However, as the fraction of patients with two proxy donors increases, a clear divergence emerges between the two approaches:
\begin{itemize}
    \item $(2,0)$-DDKE: The proportion of served patients remains relatively stagnant, fluctuating slightly above $50\%$ across all instance sizes. This indicates that restricting the exchange to pure $2$-cycles fails to capitalize on the presence of dual donors.
    \item $(2,1)$-DDKE: The solution quality improves significantly as the parameter $k$ increases. When $20\%$ of the patients have two proxy donors, the optimal $(2,1)$-DDKE solution successfully serves roughly $58\%$ to $60\%$ of all patients. 
\end{itemize}
This performance gap represents a $9\%$ to $14\%$ relative increase in the total number of patients served. Notably, the growth curve for matched patients is steepest at the smaller fractions of dual-donor participation, demonstrating immediate high returns. Overall, our experiments strongly suggest that convincing more patients to enroll in the system with two donors instead of one can yield substantial systemic benefits, improving solution outcomes for the entire patient population.

\section{Conclusion and open problems}

In this paper, we formulated and studied the dual-donor kidney exchange (DDKE) model, demonstrating that allowing up to two proxy donors per patient to donate simultaneously gives rise to a computationally and strategically rich framework. Our work opens several natural directions for future research. From a theoretical perspective, a notable gap remains for $(2,1)$-DDKE between our $0.537$ approximation ratio and our APX-hardness bound, making the design of improved approximation algorithms an intriguing open problem. Equally compelling is the $(\infty,1)$-DDKE problem: while computing a maximum cycle cover solves $(\infty,0)$-DDKE in polynomial time and immediately yields a baseline $1/2$-approximation for $(\infty,1)$-DDKE, set-packing techniques do not carry over to unbounded cycles, leaving open whether a polynomial-time algorithm can achieve an approximation ratio strictly better than $1/2$. 

Above all, our findings point to a promising practical opportunity. Motivated by our experiments, which demonstrate substantial gains in total transplants even when only a small fraction of patients enroll with two donors, we believe that adapting real-world kidney exchange programs to support dual-donor participation could meaningfully expand the effective organ supply and improve outcomes for patients awaiting a transplant.

\bibliographystyle{plain}
\bibliography{kidney}

\end{document}